\lstdefinelanguage{Algorithm}
{
 morecomment = [l]{//}, 
 morecomment = [l]{///},
 morecomment = [s]{/*}{*/},
 morestring=[b]", 
 sensitive = false,
 morekeywords = {
   Input, Output, loop, if, then, else, break, continue, end, null
	}
}
\newtheorem{theorem}{Theorem}
\newcommand{\myway}[1]{\raisebox{-4pt}{\rule{0pt}{16pt}}\colorbox[rgb]{.7,.7,.7}{#1}}
\newenvironment{definition}[1][]
   {\refstepcounter{theorem} \par\medskip\noindent
   {\bf Definition~\thetheorem \ifx #1 \else ~(#1)\fi.} 
\ignorespaces }
   {\par\medskip }
\theoremstyle{remark}
\newtheorem{remark}[theorem]{Remark}
\newtheorem{algorithm}[theorem]{\textbf{Algorithm}}
\newtheorem{fact}[theorem]{Fact}
\newcommand{\modmin}{\mathtt{min}}
\newcommand{\modmax}{\mathtt{max}}
\newcommand{\modone}{\mathtt{one}}
\newcommand{\modspread}{\mathtt{spread}}
\newcommand{\modrepl}{\mathtt{repl}}
\title{Edge- and Node-Disjoint Paths in P~Systems}
\author{Michael~J.~Dinneen, Yun-Bum~Kim, and Radu~Nicolescu
\institute{Department of Computer Science, University of Auckland,\\
Private Bag 92019, Auckland, New Zealand}
\email{\{mjd,yun,radu\}@cs.auckland.ac.nz}
}
\begin{document}
\maketitle

%-----------------------------------------------------------------------------------

\begin{abstract}
In this paper, we continue our development of algorithms used for topological network discovery.  
We present native P~system versions of two fundamental problems in graph theory: 
finding the maximum number of edge- and node-disjoint paths between a source node and target node.  
We start from the standard depth-first-search maximum flow algorithms, 
but our approach is totally distributed, 
when initially no structural information is available
and each P~system cell has to even learn its immediate neighbors.
For the node-disjoint version, our P~system rules are designed to enforce node weight capacities (of one), 
in addition to edge capacities (of one), which are not readily available in the standard network flow algorithms.
\end{abstract}

Keywords: P~systems, P~modules, simple~P~modules, cell IDs, distributed algorithms, synchronous networks, breadth-first-search, depth-first-search, edge-disjoint paths, node-disjoint paths, network flow, network discovery, routing.

%-----------------------------------------------------------------------------------

\section{Introduction}
\label{sec-introduction}

Inspired by the structure and interaction of living cells, P~systems provides a
distributed computational model, as introduced by G. P\u{a}un in
1998~\cite{Paun1998}.  The model was initially based on transition rules, but
was later expanded into a large family of related models, such as tissue and
neural P~systems (nP~systems) \cite{MartinVidePPR2003,Paun2002} and hyperdag
P~systems (hP~systems)~\cite{NDK-BWMC2009}.  Essentially, all versions of
P~systems have a structure consisting of cell-like membranes and a set of rules
that govern their evolution over time.  A large variety of rules have been used
to describe the operational behavior of P~systems, the main ones being:
multiset rewriting rules, communication rules and membrane handling rules.
Transition P~systems and nP~systems use multiset rewriting rules, P~systems
with symport/antiport operate by communicating immutable objects, P~systems
with active membranes combine all three type rules.  For a comprehensive
overview and more details, we refer the reader to \cite{Paun2002}.

Earlier in \cite{DKN-JLAP2010}, we have proposed an extensible framework 
called P~modules, to assist the programmability of P~systems.
P~modules enable the modular composition of complex P~systems
and also embrace the essential features of a variety of P~systems.
In this paper, we will use a restricted subset of this unifying model,
called simple~P~modules,
(subset equivalent to neural P~systems~\cite{MartinVidePPR2003}),
to develop algorithms for finding the maximum number of edge- and node-disjoint paths
between two cells in a fairly large class of P~systems, where 
duplex communication channels exist between neighboring cells.  
We assume that the digraph structure of the simple~P~module is completely unknown
(even the local neighboring cells must be discovered~\cite{NDK-WMC2009}) 
and we need to, via a distributed process, optimally create local routing tables 
between a given source and target cell.

There are endless natural applications that need to find alternative
routes between two points, from learning strategies to
neural or vascular remodeling after a stroke.
In this paper, we focus on a related but highly idealized goal, 
how to compute a maximum cardinality set of edge- and node-disjoint paths 
between two arbitrary nodes in a given digraph.

One obvious application related to networks is to find the best 
bandwidth utilization for routing of information between a 
source and target \cite{Robacker56}.  
For instance, streaming of applications for multi-core computations 
uses edge-disjoint paths routing for task decomposition and inter-task
communications \cite{SeoT-IPDPS2009}.
In fact, classical solutions are based on a network flow approach such as
given in \cite{FordF1956,EdmondsK1972}, 
or on Menger's Theorem, an old, but very useful, result, cited below.

\begin{theorem}[Menger~\cite{Menger1927}]
Let $D = (V,A)$ be a digraph and let $s, t \in V$. 
Then the maximum number of node-disjoint $s$--$t$ paths is equal 
to the minimum size of an $s$--$t$ disconnecting node set.
\end{theorem}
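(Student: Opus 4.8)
The plan is to prove the two inequalities separately. The easy direction---that the maximum number of node-disjoint $s$--$t$ paths is at most the minimum size of an $s$--$t$ disconnecting node set---follows from a direct counting argument. Let $S$ be any $s$--$t$ disconnecting node set and let $\mathcal{P}$ be any family of internally node-disjoint $s$--$t$ paths. Every path in $\mathcal{P}$ must meet $S$, since otherwise it would survive in $D - S$, contradicting that $S$ disconnects $s$ from $t$; and because distinct paths in $\mathcal{P}$ share no internal node, we may injectively assign to each path a distinct node of $S$. Hence $|\mathcal{P}| \le |S|$, and taking the maximum on the left and the minimum on the right gives $\max \le \min$.

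For the reverse inequality I would reduce the node-capacitated problem to an edge-capacitated flow problem and invoke the integral max-flow min-cut theorem. Concretely, build an auxiliary digraph $D'$ by splitting each internal node $v \in V \setminus \{s,t\}$ into an entry copy $v_{\mathrm{in}}$ and an exit copy $v_{\mathrm{out}}$ joined by an internal arc $(v_{\mathrm{in}}, v_{\mathrm{out}})$ of capacity $1$, rerouting every original arc $(u,v)$ as $(u_{\mathrm{out}}, v_{\mathrm{in}})$ with capacity $\infty$, and keeping $s$ and $t$ unsplit. The unit capacities on the internal arcs encode exactly the node-capacity-one constraint, so that internally node-disjoint $s$--$t$ paths in $D$ correspond to edge-disjoint $s$--$t$ paths in $D'$, that is, to integral $s$--$t$ flows carrying one unit per path.

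The key step is then to match cuts with disconnecting sets. Because every rerouted original arc has infinite capacity, any finite-capacity $s$--$t$ edge cut of $D'$ can use only internal arcs $(v_{\mathrm{in}}, v_{\mathrm{out}})$, and a cut of capacity $k$ corresponds precisely to the disconnecting node set consisting of those $v$ whose internal arc lies in the cut, a set of size $k$ in $D$; the converse correspondence is immediate. The integral max-flow min-cut theorem now supplies a maximum integral flow whose value equals the minimum cut capacity. Translating the flow back into a family of internally node-disjoint paths and the cut back into a disconnecting node set, these have equal cardinality, which yields $\max \ge \min$ and completes the proof.

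The main obstacle I anticipate is getting the capacity bookkeeping exactly right so that minimum cuts are forced to lie entirely on the split (node) arcs rather than on original arcs: assigning infinite capacity to the rerouted arcs is what rules out spurious cuts and makes the node-cut/edge-cut correspondence clean. A secondary subtlety is the boundary convention at $s$ and $t$---the statement implicitly concerns pairs for which a disconnecting node set exists (in particular $s$ and $t$ non-adjacent, since otherwise the infinite-capacity arc $(s,t)$ admits no finite cut)---and one should check that the splitting and the bijections behave correctly at the unsplit endpoints.
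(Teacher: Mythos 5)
The paper offers no proof of this statement at all: it is quoted as a classical result and attributed to Menger's 1927 paper, so there is no ``paper proof'' to compare yours against. Judged on its own merits, your argument is correct and is the standard flow-theoretic proof of the directed, vertex version of Menger's theorem: the counting argument for $\max \le \min$ is sound (each path meets any disconnecting set in an internal node, and internal disjointness makes the assignment injective), and the reduction for $\max \ge \min$ --- splitting each internal node $v$ into $v_{\mathrm{in}}, v_{\mathrm{out}}$ with a unit-capacity arc, giving rerouted arcs unbounded capacity so that finite cuts consist only of split arcs, then invoking integral max-flow min-cut --- is exactly right, including your observation that conservation at $v_{\mathrm{in}}$ forces node capacities and your caveat about the degenerate case where $s$ and $t$ are adjacent. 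It is worth noting that your node-splitting gadget is precisely the ``textbook'' transformation the paper itself describes in Section~\ref{sec-node-disjoint-paths} (Figure~\ref{fig-split-technique}); the paper uses it only to motivate its algorithmic simulation of node capacities (since P~system cells cannot be split), whereas you use it for its classical purpose, as the bridge between Menger's theorem and the max-flow min-cut theorem.
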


Another application is to find a maximum matching (or pairing) 
between two compatible sets such as the marriage arrangement problem or 
assigning workers to jobs.

Our third application (and a motivating problem for the authors) 
is the Byzantine Agreement problem \cite{DKN-JLAP2010,DKN-CMC2010},
in the case of non-complete graphs.
The standard solution (also based on Menger's Theorem) allows for $k$ faulty 
nodes (within a set of nodes of order at least $3k+1$) if and only if there
are at least $2k+1$ node-disjoint paths between each pair of nodes, 
to ensure that a distributed consensus can occur~\cite{Lynch1996}.

% ***

Briefly, the paper is organized as follows. In the next section, 
we give a formal definition of simple~P~modules, 
to give a unified platform for developing our P~systems algorithms.  
Next, in Section~\ref{sec-disjoint-paths} we summarize the standard network flow
approaches for finding edge- and node-disjoint paths in digraphs and 
we discuss optimizations and alternative strategies which are more appropriate for P~systems.
In Section~\ref{sec-structural-vs-search-digraph}, we discuss three possible
relations between the structural digraph underlying a simple~P~module
and the search digraph used for determining paths.
Section~\ref{sec-cell-neighborhoods} details breadth-first-search rules used 
to determine the local cell topologies,
i.e. all cell neighborhoods; this is a common preliminary phase for 
both the edge- and node-disjoint path implementations.
The next two sections detail depth-first-search rules
for the edge-disjoint case (in Section~\ref{sec-edge-disjoint-path-algorithm-rules})
and for the node-disjoint case (in Section~\ref{sec-node-disjoint-path-algorithm-rules}).
Finally, in Section~\ref{sec-conclusion}, 
we end with conclusions and some open problems.

%-----------------------------------------------------------------------------------

\section{Preliminary}
\label{sec-preliminary}

We assume that the reader is familiar with the basic terminology 
and notations: functions, relations, graphs, edges, nodes (vertices), 
directed graphs, arcs, paths, directed acyclic graphs (dags), 
trees, alphabets, strings and multisets
\cite{NDK-IJCCC2010}.  We now introduce \emph{simple~P~modules},
as a unified model for representing several types of P~systems. 
Simple~P~modules are a simplified variety of the 
full P~modules, which omit the extensibility features
and use duplex communication channels only. 
With these restrictions, although their formal definitions are different,
simple~P~modules are essentially equivalent to neural P~systems~\cite{MartinVidePPR2003}. 
For the full definition of P~modules and
further details on recursive modular compositions,
the reader is referred to \cite{DKN-JLAP2010}.

\begin{definition}[simple~P~module]
\label{def:simple-P-module}
A \emph{simple~P~module} is a system $\Pi = (O, K, \delta)$%
, where:
\begin{enumerate}
  \item $O$ is a finite non-empty alphabet of \emph{objects};
  \item $K=\{\sigma_1,\sigma_2,\ldots,\sigma_n\}$ is a finite set of (internal) \emph{cells};
  \item $\delta$ is a binary relation on $K$,
      without reflexive or symmetric arcs,
	   which represents a set of parent-child structural arcs between existing cells, 
	   with \emph{duplex} communication capability.
\end{enumerate}

Each cell, $\sigma_i \in K$, 
has the initial form $\sigma_i = (Q_i, s_0, w_0, R_i)$ and 
general form $\sigma_i = (Q_i,s,w,R_i)$, where:  
\begin{itemize}
  \item $Q_i$ is a finite set of \emph{states};
  \item $s_0 \in Q_i$ is the \emph{initial state}; $s\in Q_i$ is the \emph{current
state};
  \item $w_0 \in {O}^*$ is the \emph{initial multiset} of objects;
   $w \in {O}^*$ is the \emph{current multiset} of objects;
  \item $R_i$ is a finite \emph{ordered} set of multiset rewriting \emph{rules} of the general form: 
        $s~x \rightarrow_{\alpha} s'~x'~(u)_{\beta_\gamma}$, 
        where $s,s'\in Q$, $x,x'\in {O}^*$, 
        $u \in {O}^*$, 
        $\alpha \in \{\modmin, \modmax\}$,
        $\beta \in \{\uparrow, \downarrow, \updownarrow\}$,
        $\gamma \in \{\modone, \modspread, \modrepl\} \cup K$.
        If $u=\lambda$, denoting the \emph{empty string} of objects, this rule can be 
abbreviated as $s~x \rightarrow_{\alpha} s'~x'$. The application of a rule takes two sub-steps, 
after which the cell's current state $s$ and multi-set of objects $x$ is replaced by $s'$ and $x'$,
respectively, while $u$ is a message which is sent as specified by the 
transfer operator $\beta_\gamma$.
\end{itemize}

The rules given by the ordered set(s) $R_i$ are applied in the \emph{weak priority} order~\cite{Paun2006}.
For a cell $\sigma_i = (Q_i, t, w, R_i)$, a rule $s~x \rightarrow_{\alpha}
s'~x'~(u)_{\beta_\gamma} \in R_i$
is \emph{applicable} if $t = s$ and $x \subseteq w$.
Additionally, if $s~x \rightarrow_{\alpha} s'~x'~(u)_{\beta_\gamma}$ is the first applicable rule,
then each subsequent applicable rule's target state 
(i.e. state indicated in the right-hand side) must be $s'$.
The semantics of the rules and the meaning of operators $\alpha$,
$\beta$, $\gamma$ are now described.
\end{definition}

For convenience, we will often identify a cell $\sigma_i$ with its index (or
\emph{cell ID}) $i$, when the context of the variable $i$ is clear.
We accept that cell IDs appear as objects or indices of complex objects.
Also, we accept \emph{custom cell ID} rules, which distinguish the cell ID of
the current cell from other cell IDs.
For example, the rule $0.1$ of Section~\ref{sec-cell-neighborhoods}, 
given as ``$s_0~ g_i \rightarrow_{\modmin} s_0$'' for cell $\sigma_i$,
appears as ``$s_0~ g_1 \rightarrow_{\modmin} s_0$'' in cell $\sigma_1$ 
and as ``$s_0~ g_2 \rightarrow_{\modmin} s_0$'' in cell $\sigma_2$.

The rewriting operator $\alpha = \modmax$
indicates that an applicable rewriting rule of $R_i$ is applied as many times as
possible, while the operator $\alpha = \modmin$ requires a rule of $R_i$ is 
applied only once.
The communication structure is based on the underlying digraph structure.  
In this paper, and we will only use the $\beta =\; \updownarrow$ 
and  $\gamma \in \{ \modrepl \} \cup K$ transfer operators.
With reference to cell $\sigma_i$, 
a rewriting rule using $(u)_{\updownarrow_{\modrepl}}$ indicates
that the multiset $u$ is replicated and sent to all neighboring cells
(parents and children), i.e. to all cells in $\delta(i) \cup \delta^{-1}(i)$. 
Assuming that cell $\sigma_j$ is a parent or a child of cell $\sigma_i$, 
i.e. $j \in \delta(i) \cup \delta^{-1}(i)$,
a rewriting rule using $(u)_{\updownarrow_j}$
indicates that the multiset $u$ is specifically sent cell $\sigma_j$.
Otherwise, if $j \notin \delta(i) \cup \delta^{-1}(i)$,
the rule is still applied, but the message $u$ is silently discarded.
The other non-deterministic transfer operators (e.g., $\modone$, $\modspread$,
$\uparrow$, $\downarrow$) are just mentioned here for completeness,
without details, and are not used in this paper.
For details, the interested reader is referred to \cite{DKN-JLAP2010}.

\begin{remark}
This definition of simple~P~module subsumes several earlier definitions of P~systems, hP~systems and nP~systems.
If $\delta$ is a \emph{tree}, then $\Pi$ is essentially a tree-based P~system 
(which can also be interpreted as a cell-like P~system).
If $\delta$ is a \emph{dag}, then $\Pi$ is essentially an hP~system.
If $\delta$ is a \emph{digraph}, then $\Pi$ is essentially an nP~system.
\end{remark}

%-----------------------------------------------------------------------------------

\section{Disjoint paths in digraphs}
\label{sec-disjoint-paths}

We now briefly describe the basic edge- and node-disjoint paths algorithms, 
based on network flow, particularized for unweighted edges 
(i.e. all edge capacities are one), see Ford and Fulkerson~\cite{FordF1956}.
Our presentation will largely follow the standard approach, but also 
propose a couple of customizations and optimizations, specifically targeted for  
running on highly distributed and parallel computing models,
such as P~systems.

We are given a digraph $G=(V,E)$ and two nodes, 
a source node, $s \in V$, and a target node, $t \in V$.
We consider the following two optimization problems:
(1) find a maximum cardinality set of edge-disjoint paths from $s$ to $t$; and
(2) find a maximum cardinality set of node-disjoint paths from $s$ to $t$.
Obviously, any set of node-disjoint paths is also edge-disjoint,
but the converse is not true.
For example: 
\begin{itemize}
\item Figure~\ref{fig-node-edge-paths}~(a) 
shows a maximum cardinality set of edge-disjoint paths for a digraph $G$,
which is also a maximum cardinality set of node-disjoint paths,
\item Figure~\ref{fig-node-edge-paths}~(b) 
shows two maximum cardinality sets of edge-disjoint paths for the same digraph $G$,
which are not node-disjoint.
\item Figure~\ref{fig-different-edge-node} shows a digraph where 
the maximum number of edge-disjoint paths
is greater than the maximum number of node-disjoint paths.
\end{itemize}

\begin{figure}[h]
\centerline{\includegraphics[scale=1.0]{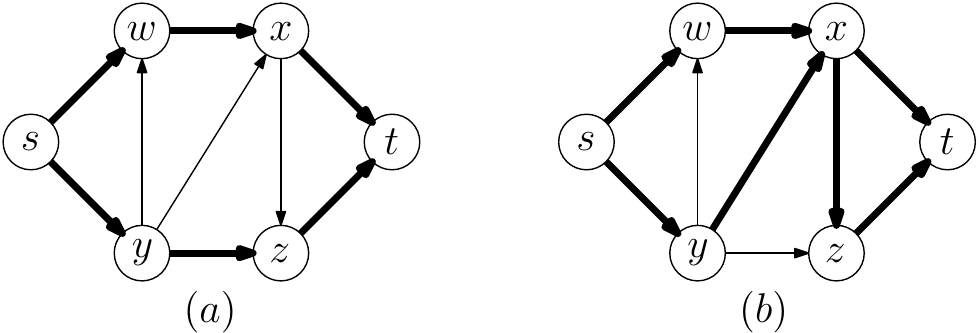}}
\caption{For this digraph, the maximum number of edge-disjoint paths from $s$ to $t$, 
which is 2, can be achieved in three ways: 
(a) paths set $\{ s.w.x.t$, $s.y.z.t\}$;
(b) either of the following two paths sets: 
$\{ s.w.x.t$, $s.y.x.z.t\}$, $\{ s.w.x.z.t$, $s.y.x.t\}$.
Paths shown in (a) are also node-disjoint, but paths shown in (b) are not.}
\label{fig-node-edge-paths}
\end{figure}

\begin{figure}[h]
\centerline{\includegraphics[scale=1.0]{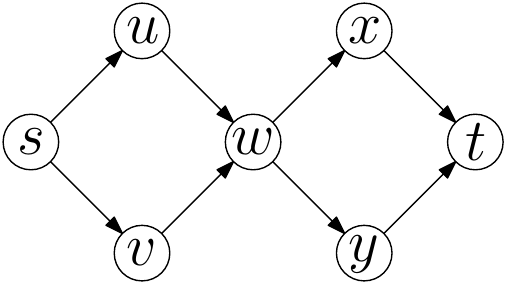}}
\caption{For this digraph, 
the maximum number of edge-disjoint paths from $s$ to $t$ (2) 
is greater than the maximum number of node-disjoint paths (1).}
\label{fig-different-edge-node}
\end{figure}

% ---------------------------------------------------------------------------

\subsection{Edge disjoint paths in digraphs}
\label{sec-edge-disjoint-paths}

In both edge- and node-disjoint cases, 
the basic algorithms work by repeatedly searching paths,
called augmenting paths, in an auxiliary structure, called residual network
or residual digraph.
We will first focus more on \emph{edge-disjoint paths}, 
because the \emph{node-disjoint paths}
can be considered as an edge-disjoint paths problem, with additional constraints. 

For the following ``network flow'' definition for digraphs with non-weighted
arcs, we say that an arc $(u,v)$ is in a set of paths $P$, denoted 
by the slightly abused notation $(u,v) \in P$, if there exists 
a path $\pi \in P$ that uses arc $(u,v)$.

\begin{definition}
Consider a digraph $G=(V,E)$, two nodes $s$ and $t$, $\{s, t\} \subseteq V$, 
and a set $P$ of edge-disjoint paths from $s$ to $t$.
Nodes in $P$ are called \emph{flow-nodes} and 
arcs in $P$ are called \emph{flow-arcs}.

Given path $\pi \in P$, each flow-arc $(u,v) \in \pi$ has a natural 
\emph{incoming} and \emph{outgoing} direction--the flow is from the source to the target;
with respect to $\pi$, $u$ is the \emph{flow-predecessor} of $v$ and 
$v$ is the \emph{flow-successor} of $u$.

The \emph{residual digraph} is the digraph $R=(V,E')$, 
where the arcs in $P$ are reversed, or, more formally,
$E'= (E \setminus \{(u,v) \mid (u,v) \in P\}) \cup \{(v,u) \mid (u,v) \in P\}$.
Any path from $s$ to $t$ in $R$ is called an \emph{augmenting path}.

Given augmenting path $\alpha$, each flow-arc $(u,v) \in \alpha$ has also a natural 
\emph{incoming} and \emph{outgoing} direction--the flow is from the source to the target;
with respect to $\alpha$, $u$ is the \emph{search-predecessor} of $v$ and 
$v$ is the \emph{search-successor} of $u$.
\end{definition}

\begin{fact}
\label{fact-augmenting}
Augmenting paths can be used to construct a larger set of edge-disjoint paths.
More precisely, consider a digraph $G$ and two nodes $s$ and $t$.
A set $P_k$ of $k$ edge-disjoint paths from $s$ to $t$
and an augmenting path $\alpha$ from $s$ to $t$
can be used together to construct a set $P_{k + 1}$ of $k + 1$ edge-disjoint paths.
First, paths in $\{ \alpha \} \cup P_k$ are fragmented, by 
removing ``conflicting'' arcs, i.e. arcs that appear in $Q \cup \tilde{Q}$,
where $Q = P \cap \tilde{\alpha}$ (where \~{} indicates arc reversal).
Then, new paths are created by concatenating resulting fragments.
For the formal definition of this construction,
we refer the reader to Ford and Fulkerson~\cite{FordF1956}.
Note that including a reversed arc in an augmenting path
is known as \emph{flow pushback operation}.
\end{fact}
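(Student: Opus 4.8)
The plan is to recast the fragment-and-concatenate construction as an integral flow operation and then invoke flow decomposition. First I would represent the set $P_k$ by its \emph{flow-arc set} $F \subseteq E$, the union of the arcs of the $k$ paths. Since the paths are edge-disjoint, $F$ is a set (each arc used once), and it satisfies unit-capacity flow conservation: every internal node has equal in- and out-degree in $F$, while $s$ has out-minus-in degree $k$ and $t$ has in-minus-out degree $k$. I would take the augmenting path $\alpha$ in the residual digraph $R$ to be simple (repeated nodes can always be short-circuited), and classify its arcs into two kinds: the genuine forward arcs $A^+ = \alpha \cap E$, and the pushback arcs $A^- = \tilde{Q}$, where $Q = P \cap \tilde{\alpha}$ collects the flow-arcs that $\alpha$ traverses in reverse. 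The key observation here is that $A^+ \subseteq E \setminus F$, because every flow-arc is reversed in $R$, so its forward copy is not available to $\alpha$.

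Next I would verify that the fragmentation-and-concatenation described in the statement produces exactly the arc set $F' := (F \setminus Q) \cup A^+$. Deleting the conflicting arcs $Q \cup \tilde{Q}$ removes $Q$ from the fragments of $P_k$ (leaving $F \setminus Q$) and removes $\tilde{Q} = A^-$ from $\alpha$ (leaving $A^+$); since $\alpha$ contains no forward flow-arc, nothing further of $\alpha$ is deleted. As $A^+ \cap F = \emptyset$, the two surviving collections are disjoint, so the reconnected arcs form the simple set $F'$, and unit capacity is automatic.

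Then I would establish that $F'$ is a unit-capacity integral flow of value $k+1$ by a node-by-node imbalance computation, measuring the change in out-degree minus in-degree relative to $F$. Each forward arc of $A^+$ and each removed flow-arc of $Q$ contributes $+1$ to the imbalance at its tail along $\alpha$ and $-1$ at its head; telescoping this contribution along the simple path $s = v_0, v_1, \ldots, v_m = t$ cancels at every internal node and leaves a net $+1$ at $s$ and $-1$ at $t$. Hence $F'$ conserves flow internally and has value exactly $k+1$. Applying the integral flow decomposition theorem and discarding any cycles that may appear (they are arc-disjoint from the rest and do not touch the $s$--$t$ balance), $F'$ splits into $k+1$ arc-disjoint $s$--$t$ paths, which is the desired $P_{k+1}$.

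The main obstacle will be the reconnection guarantee, namely that the surviving fragments actually reassemble into complete $s$--$t$ paths rather than leaving dangling ends; the conservation computation is precisely what licenses the appeal to flow decomposition and resolves this. A secondary point requiring care is ruling out anti-parallel cancellation (a flow-arc $(u,v)$ together with a newly added $(v,u)$): taking $\alpha$ simple avoids it, and in the degenerate case such a $2$-cycle is simply deleted, which matches the construction of Ford and Fulkerson~\cite{FordF1956}. Edge-disjointness of the result then follows directly from $A^+ \cap F = \emptyset$.
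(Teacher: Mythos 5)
Your overall route---recasting $P_k$ as a unit-capacity integral flow $F$, checking conservation by telescoping along $\alpha$, and finishing with integral flow decomposition plus cycle-discarding---is exactly the classical Ford--Fulkerson augmentation argument that the paper itself invokes only by citation (it offers no in-text proof of this Fact), so your strategy is the intended one. However, there is a genuine gap in your arc classification, precisely at the anti-parallel edge case you flag and then dismiss. You set $A^+ = \alpha \cap E$ and $Q = P \cap \tilde{\alpha}$, and you rely on (i) $A^+ \subseteq E \setminus F$ and (ii) each arc of $\alpha$ contributing exactly $+1$ at its $\alpha$-tail and $-1$ at its $\alpha$-head. Both fail when anti-parallel pairs meet the flow. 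If $(u,v) \in E \setminus F$ while $(v,u) \in F$ and $\alpha$ traverses $(u,v)$, that single residual arc is counted \emph{twice} under your definitions: $(u,v)$ lands in $A^+$ \emph{and} $(v,u)$ lands in $Q$, so this step of $\alpha$ contributes $\pm 2$, the telescoping no longer cancels at internal nodes, and $F' = (F \setminus Q) \cup A^+$ is not a flow. Concretely, take $F = \{(s,v),(v,u),(u,t)\}$ and $\alpha = s.u.v.t$ with $(s,u),(u,v),(v,t) \in E$: then $F' = \{(s,v),(u,t),(s,u),(u,v),(v,t)\}$ has imbalance $+1$ at $u$ and $-1$ at $v$, so the decomposition theorem does not apply to it. Similarly, if both $(u,v)$ and $(v,u)$ lie in $F$ (two edge-disjoint paths crossing in opposite directions), then $(u,v)$ re-enters the residual arc set as the reversal of $(v,u)$, and $A^+ = \alpha \cap E$ is no longer disjoint from $F$, breaking your unit-capacity claim. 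Note that taking $\alpha$ simple does \emph{not} avoid either situation, contrary to your remark: simplicity constrains repeated nodes of $\alpha$, not the relation between $\alpha$'s arcs and $F$.

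The repair is small but necessary: classify each arc of $\alpha$ by its residual \emph{provenance}, with pushback priority---set $A^- = \alpha \cap \tilde{F}$, $A^+ = \alpha \setminus A^-$, and $Q = \tilde{A^-}$. Then every $\alpha$-arc is counted exactly once, $A^+ \subseteq E \setminus F$ holds by construction, your per-arc $\pm 1$ bookkeeping and telescoping give a flow of value $k+1$, and decomposition with cycles discarded yields $P_{k+1}$ as you describe. You should not regard this as a pathological corner: the search digraph the paper actually works with (case (3) of Section~\ref{sec-structural-vs-search-digraph}) is the symmetric closure of $\delta$, so anti-parallel arc pairs are the norm there, and the paper's own rules carefully distinguish a pushback on a flow-arc from a forward move along the reverse arc---exactly the distinction your set-membership classification erases.
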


This construction is illustrated in Figure~\ref{fig-residual-digraph}.
Figure~\ref{fig-residual-digraph}~(a) illustrates a digraph $G$ and 
a set $P_1$ of edge-disjoint paths from $s$ to $t$,
currently the singleton $\{ \pi_0 \}$, where $\pi_0 = s.y.x.t$.
Figure~\ref{fig-residual-digraph}~(b) shows its associated residual digraph $R$ 
(note the arcs reversal).
Figure~\ref{fig-residual-digraph}~(c) shows an augmenting path $\alpha$ in $R$, 
$\alpha = s.w.x.y.z.t$.
Figure~\ref{fig-residual-digraph}~(d) shows the extended set $P_2$ 
(after removing arcs $(x,y)$ and $(y,x)$),
consisting of two edge-disjoint paths from $s$ to $t$, 
$\pi_1 = s.w.x.t$, $\pi_2 = s.y.z.t$.
Figure~\ref{fig-residual-digraph1} shows a similar scenario, 
where another augmenting path is found.
Note that the two paths illustrated in Figure~\ref{fig-residual-digraph}~(d)
form both a maximum edge-disjoint set and a maximum node-disjoint set;
however, the two paths sets shown in Figure~\ref{fig-residual-digraph1}~(d)
form two other maximum edge-disjoint path sets, but none of them is node-disjoint.

\begin{figure}[h]
\centerline{\includegraphics[scale=0.85]{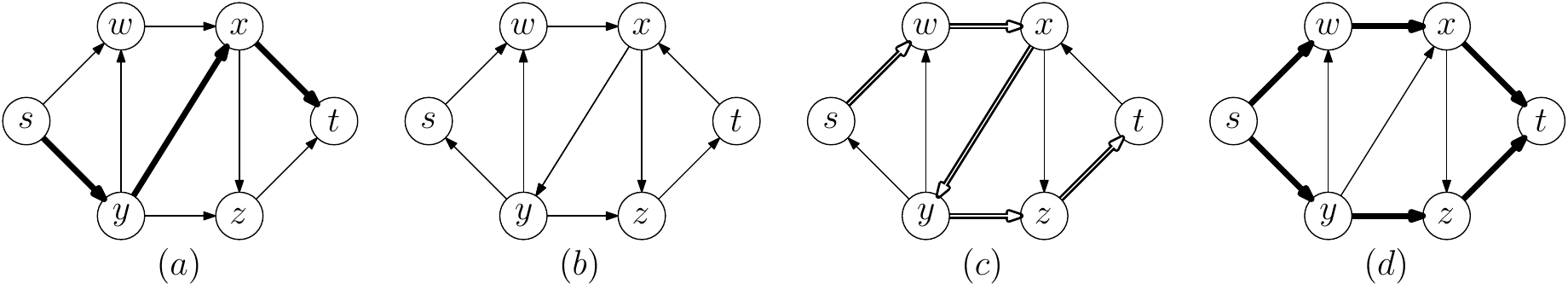}}
\caption{A residual digraph and an augmenting path: 
(a) a digraph $G$ and one (edge-disjoint) path $\pi_0$ from $s$ to $t$ (indicated by bold arrows).
(b) the residual digraph $R_0$ associated to digraph $G$ and path $\pi_0$.
(c) an augmenting path $\alpha$ in $R_0$  (indicated by hollow arrows).
(d) two new edge-disjoint paths $\pi_1$ and $\pi_2$,
reconstructed from $\pi_0$ and $\alpha$ (both indicated by bold arrows).}
\label{fig-residual-digraph}
\end{figure}

\begin{figure}[h]
\centerline{\includegraphics[scale=0.85]{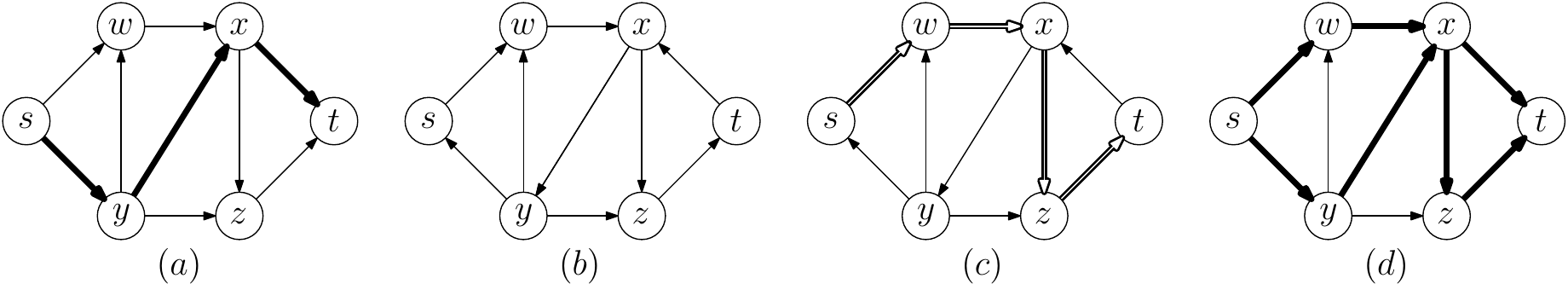}}
\caption{The residual digraph of Figure~\ref{fig-residual-digraph} 
with another augmenting path and two new paths sets, 
$\{s.w.x.t, s.y.x.z.t\}$, $\{s.w.x.z.t, s.y.x.t\}$,
which are edge-disjoint but not node-disjoint.}
\label{fig-residual-digraph1}
\end{figure}

\newpage

The pseudo-code of Algorithm~\ref{alg-basic-edge-disjoint-paths-algorithm} 
effectively finds the maximum number (and a representative set) 
of edge-disjoint paths from $s$ and $t$. 

\begin{algorithm}[\textbf{Basic edge-disjoint paths algorithm}]
%\caption{Basic path augmenting algorithm}
\label{alg-basic-edge-disjoint-paths-algorithm}
~ %\vspace{0.1 cm}

\lstset{frame=none,xleftmargin={0.75 cm},numbers=left,numberstyle=\small}
\begin{lstlisting}
Input: #a digraph $G = (V,E)$ and two nodes $s \in V$, $t \in V$#
#$k = 0$ (the stage counter)#
#$P_0 = \emptyset$ (the current set of edge-disjoint paths)#
#$R_0 = G$ (the current residual digraph)#
loop
  #$\alpha = $ an augmenting path in $R_k$, from $s$ to $t$, if any (this is a search operation)#
  if $\alpha$ = null then break 
  #$k = k + 1$ (next stage)#
  #$P_{k} = $ the larger paths set constructed using $P_{k-1}$ and $\alpha$ (as indicated in Fact~\ref{fact-augmenting})#
  #$R_{k} = $ the residual digraph of $G$ and $P_{k}$#
end loop
Output: #$k$ and $P_k$, i.e. the maximum number of edge-disjoint paths and a representative set#
\end{lstlisting}

\end{algorithm}
\smallskip

Typically, the internal implementation of search at step~6 alternates between
a \emph{forward} mode in the residual digraph, 
which tries to extend a partial augmenting path,
and a backwards \emph{backtrack} mode in the residual digraph, 
which retreats from an unsuccessful attempt,
looking for other ways to move forward.
The internal implementation of step~9 (i.e. Fact~\ref{fact-augmenting}) walks backwards in the residual digraph,
as a \emph{consolidation} phase, which recombines the newly found augmenting path with the existing edge-disjoint paths.

This algorithm runs in $k+1$ stages, i.e. in up to $\mathtt{outdegree}(s) + 1$ stages,
if we count the number of times it looks for an augmenting paths,
and terminates when a new augmenting path is not found.
The actual procedure used (in step~6) to find the augmenting path separates two families of algorithms:
(1) algorithms from the Ford-Fulkerson family use a depth-first-search (DFS);
(2) algorithms from the Edmonds-Karp family use a breadth-first-search (BFS).
As usual, both DFS and BFS use ``bread crumb'' objects, as markers, to avoid cycles;
at the end of each stage, these markers are cleaned, to start again with a fresh context.
In this paper, we develop P~algorithms from the Ford-Fulkerson family, i.e. using DFS.

% ---------------------------------------------------------------------------

\subsection{Node disjoint paths in digraphs}
\label{sec-node-disjoint-paths}

The edge-disjoint version can be also used to find node-disjoint paths.
The textbook solution for the node-disjoint problem is usually 
achieved by a simple procedure which transforms the original digraph
in such a way that, on the transformed digraph,
the edge-disjoint problem is identical to 
the node-disjoint problem of the original digraph. 
Essentially, this procedure globally replaces every node $v$, 
other than $s$ and $t$, with two nodes,
an \emph{entry} node $v_{1}$ and an \emph{exit} node $v_{2}$,
connected by a single arc $(v_1,v_2)$. 
More formally, the new digraph $G' = (V', E')$ has
$V' = \{ s, t \} \cup \{ v_1, v_2 \mid v \in V \setminus \{s,t\}$, 
$E' = \{ (v_1, v_2) \mid v \in V \setminus \{s,t\} \} \cup \{ (u_2,v_1) \mid (u,v) \in E \}$,
where, for convenience, we assume that $s_1=s_2=s$ and $t_1=t_2=t$ are aliases.
This standard node-splitting technique is illustrated in Figure~\ref{fig-split-technique}.
It is straightforward to see that the newly introduced arcs
$(w_1,w_2)$, $(x_1,x_2)$, $(y_1,y_2)$ and $(z_1,z_2)$,
constrain any edge-disjoint solution to be also node-disjoint.

\begin{figure}[h]
\centerline{\includegraphics[scale=1.0]{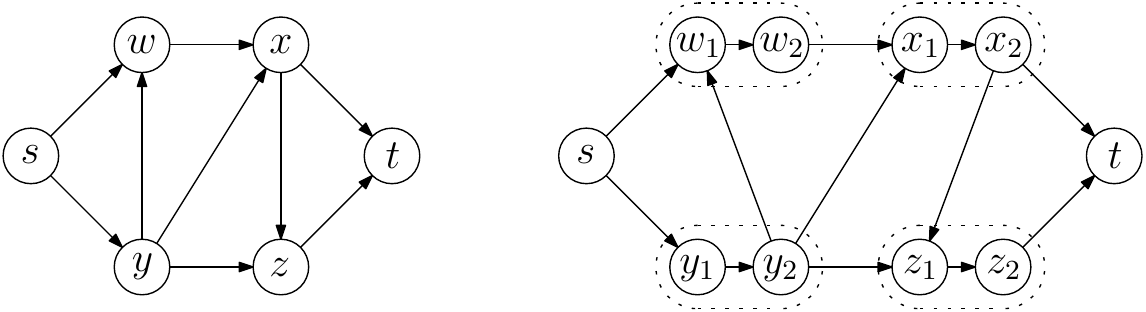}}
\caption{The node splitting technique.}
\label{fig-split-technique}
\end{figure}

However, in our case,
since each node is identified with a P~systems cell,
we cannot solve the node-disjoint paths problem
using the standard node-splitting technique.
We propose two non-standard search rules, 
which together limit the out-flow capacity of each $v \in V\setminus \{s,t\}$ to one,
by simulating the node-splitting technique, without actually splitting the nodes.
We believe that our rules can be used in other distributed network models where 
the standard node-splitting technique is not applicable.
These rules are illustrated by the scenario presented in Figure~\ref{fig-interesting-case},
where we assume that we have already determined a first flow-path, $s$.$x$.$y$.$z$.$t$,
and we are now trying to build a new augmenting path.

\begin{enumerate}
\item Consider the case when the augmenting path, consisting of $s$,
tries flow-node $y$ via the non-flow arc $(s,y)$.
We cannot continue with the existing non-flow arc $(y,t)$
(as the edge-disjoint version would do), 
because this will exceed node $y$'s capacity, which is one already.
Therefore, we continue the search 
with just the reversed flow-arc $(y,x)$.
Note that, in the underlying node-splitting scenario,
we are only visiting the entry node $y_1$, but not its exit pair $y_2$. 

\item Consider now the case when the augmenting path, extended now to $s$.$y$.$x$.$z$,
tries again the flow-node $y$, via the reversed flow-arc $(z,y)$.
It may appear that we are breaking the traditional search rules,
by re-visiting the already visited node $y$.
However, there is no infringement in the underlying node-splitting scenario,
where we are now trying the not-yet-visited exit node $y_2$
(to extend the underlying augmenting path $s$.$y_1$.$x_2$.$z_1$).
From $y$, we continue with any available non-flow arc, if any, 
otherwise, we backtrack. In our example, we continue with arc $(y,t)$.
We obtain a new augmenting ``path'', $s$.$y$.$x$.$z$.$y$.$t$
(corresponding to the underlying augmenting path $s$.$y_1$.$x_2$.$z_1$.$y_2$.$t$).
We further recombine it with the already existing flow-path $s$.$x$.$y$.$z$.$t$,
and we finally obtain the two possible node-disjoint paths,
$s$.$x$.$z$.$t$ and $s$.$y$.$t$.
\end{enumerate}

\begin{figure}[h]
\centerline{\includegraphics[scale=0.9]{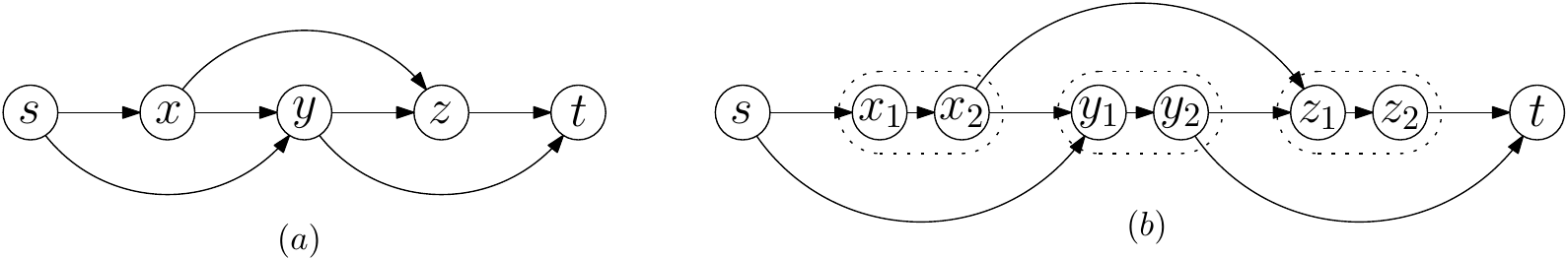}}
\caption{Node-disjoint paths.
(a) non-standard search: flow path $s$.$x$.$y$.$z$.$t$ and augmenting ``path'' $s$.$y$.$x$.$z$.$y$.$t$.
(b) node-splitting: flow path $s$.$x_1$.$x_2$.$y_1$.$y_2$.$z_1$.$z_2$.$t$ and 
augmenting path $s$.$y_1$.$x_2$.$z_1$.$y_2$.$t$.
}
\label{fig-interesting-case}
\end{figure}

The following theorem is now straightforward:

\begin{theorem}
If the augmented path search in step~6 of Algorithm~\ref{alg-basic-edge-disjoint-paths-algorithm} is modified as indicated above,
the algorithm will terminate with a restricted subset of edge-disjoint paths,
forming a maximum cardinal subset of node-disjoint paths.
\end{theorem}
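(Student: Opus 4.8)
The plan is to reduce the claim to the correctness of the standard node-splitting reduction, which the excerpt has already described (Figure~\ref{fig-split-technique}), by showing that the modified search is nothing more than a faithful \emph{simulation} of the ordinary search run on the node-split digraph. Let $G'=(V',E')$ be the node-split digraph of $G$, and let $\phi:V'\to V$ be the collapse map with $\phi(s)=s$, $\phi(t)=t$ and $\phi(v_1)=\phi(v_2)=v$ for interior $v$, extended to walks vertex-wise. Because each interior node of $G$ is represented in $G'$ by the single capacity-one arc $(v_1,v_2)$, edge-disjoint $s$--$t$ path sets of $G'$ are in correspondence under $\phi$ with node-disjoint $s$--$t$ path sets of $G$, maxima corresponding to maxima (the standard correctness of node-splitting, consistent with Menger's Theorem~\cite{Menger1927}). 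Since the unmodified Algorithm~\ref{alg-basic-edge-disjoint-paths-algorithm} on $G'$ terminates with a maximum edge-disjoint set (Fact~\ref{fact-augmenting} together with Ford--Fulkerson termination \cite{FordF1956}), it suffices to prove that, stage by stage, the modified run on $G$ and the unmodified run on $G'$ stay in lock-step under $\phi$.

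The technical core is a local residual dictionary at each split pair $(v_1,v_2)$. Fix node-disjoint paths $P$ in $G$ and the corresponding edge-disjoint set $P'$ in $G'$ (so $\phi(P')=P$), with residual digraphs $R$ and $R'$. For a flow-node $v$ the splitting arc is reversed in $R'$, and one checks: at $v_1$ (the state reached by entering $v$ along a forward/non-flow arc) the only residual exit is the reversed flow-arc to the flow-predecessor, whose $\phi$-image is exactly the continuation forced by Rule~1; while at $v_2$ (the state reached by entering $v$ along a reversed flow-arc from its flow-successor) the residual exits are the forward non-flow arcs $(v_2,w_1)$, whose $\phi$-images are the non-flow continuations allowed by Rule~2. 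For a non-flow node the splitting arc is forward, so any residual traversal uses $v_1\to v_2$ and then leaves, consuming the node exactly once; this is the ordinary search behaviour and is what enforces node-capacity one. Thus ``arriving via a non-flow arc'' simulates a visit to $v_1$ and ``arriving via a reversed flow-arc'' simulates a visit to the \emph{distinct} vertex $v_2$, which is precisely why Rule~2's apparent re-visit of an already-visited node is legitimate, matching the two cases of Figure~\ref{fig-interesting-case}.

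With the dictionary in hand I would argue by induction on the stage counter $k$ that $P_k=\phi(P'_k)$ with $P_k$ node-disjoint, and that a walk is a legal modified-search continuation in $R_k$ if and only if it is the $\phi$-image of an augmenting-path prefix in $R'_k$; the base case $k=0$ holds because $R_0=G$ is the $\phi$-image of the residual of $G'$ (the forward splitting arcs collapsing to harmless interior self-loops). The dictionary supplies the inductive step for the search (soundness: each modified move lifts to an $R'_k$-arc; completeness: each $R'_k$-arc projects to a legal modified move), and the consolidation of Fact~\ref{fact-augmenting}, being defined purely by arc reversal and fragment concatenation, commutes with $\phi$ and yields $P_{k+1}=\phi(P'_{k+1})$, again node-disjoint. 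Hence the two runs discover augmenting paths in exactly the same stages and halt together; at termination $P'_k$ is a maximum edge-disjoint set of $G'$, so $P_k=\phi(P'_k)$ is a maximum node-disjoint set of $G$, as claimed.

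The main obstacle is the completeness half of the dictionary. It is easy to see that every modified move corresponds to a residual arc of $G'$, but the converse requires care at a chain of consecutive flow-nodes: pushing the flow back through two adjacent flow-nodes $u,v$ uses, in $G'$, the internal reversed splitting arc $v_2\to v_1$ before the reversed flow-arc $v_1\to u_2$, a move whose $\phi$-image is an (invisible) self-loop at $v$ followed by the reversed flow-arc to $u$. One can exhibit instances---e.g. a single flow-path $s.a.b.c.t$ with extra arcs $(s,c)$ and $(a,t)$, whose unique augmenting path is $s.c.b.a.t$---in which the augmenting path \emph{must} traverse such a segment, so a correct modification has to treat the reversed flow-arc to the predecessor as a legal continuation from a $v_2$-state as well, not only the forward non-flow arcs literally named in Rule~2. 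Verifying that the node-disjoint search rules admit this continuation, and hence realise the full residual reachability of $G'$ rather than only the two illustrative cases, is the crux; once it is established, the inductive simulation and the appeal to node-splitting optimality are routine.
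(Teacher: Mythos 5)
Your proposal follows the same route the paper intends: the paper in fact prints no proof at all (the theorem is introduced with ``The following theorem is now straightforward''), the implied argument being exactly the simulation of the node-splitting reduction that you formalize with the collapse map $\phi$, the local residual dictionary, and the stage-by-stage induction. What your formalization adds---and this is its real value---is that the completeness half of the dictionary genuinely fails for the prose rules of Section~\ref{sec-node-disjoint-paths} read literally, so the ``straightforward'' claim hides a real issue. Your counterexample is correct: if the first stage finds the flow path $s.a.b.c.t$ (a legitimate DFS outcome) in the digraph with extra arcs $(s,c)$ and $(a,t)$, the unique augmenting path in the split digraph is $s.c_1.b_2.b_1.a_2.t$; node $b$ is entered along a reversed flow-arc and must be left along \emph{another} reversed flow-arc, since $b$ has no outgoing non-flow arc, whereas Rule~2 (``we continue with any available non-flow arc, if any, otherwise, we backtrack'') forces a backtrack at $b$. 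The literal rules then terminate with one path although two node-disjoint paths ($s.a.t$ and $s.c.t$) exist, so the theorem's hypothesis ``modified as indicated above'' needs Rule~2 amended exactly as you say: from the exit state of a flow-node the search may also continue by a pushback to the flow-predecessor, provided the node's entry copy is unvisited.

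It is worth noting that the paper's actual cell rules in Section~\ref{sec-node-disjoint-path-algorithm-rules} do admit this continuation, so the implementation (as opposed to the prose) realizes the full residual reachability of $G'$. A cell reached by a pushback fires rule 7.5 ($s_7~c_j b_j \rightarrow_{\modmin} s_8~e_j$) and enters state $s_8$; when rule 8.1 exhausts its non-flow neighbours, rule 8.2 sends it to state $s_{10}$, where rule 10.1 ($s_{10}~a p_j \rightarrow_{\modmin} s_{11}~a r_j~(b_i)_{\updownarrow_{j}}$) pushes back to the flow-predecessor---precisely the move your example requires. With Rule~2 so repaired, your dictionary closes in both directions, the induction you outline goes through, and the appeal to node-splitting optimality (Menger) finishes the argument. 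So your proof is complete modulo the repair you yourself identify, and it is strictly more informative than the paper's treatment, which asserts the result without noticing the discrepancy between its informal search rules and its implementation.
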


% ---------------------------------------------------------------------------

\subsection{Pointer management}
\label{sec-pointer-management}

With respect to the implementation, the edge-disjoint version provides 
its own additional challenge, not present in the node-disjoint version. 
In the node-disjoint version,
each flow-node needs only one pointer to its flow-predecessor and another 
to its flow-successor. However, in the edge-disjoint version,
a flow-node can have $k$ flow-predecessors and $k$ flow-successors, with $k \geq 1$,
where each combination is possible, giving rise to $k!$ 
different edge-disjoint paths sets, each of size $k$, passing through this node.
A naive approach would require recording full details of all $k!$ possible 
size-$k$ paths sets,
or, at least, full details for one of them.

In our simplified approach, we do not keep full path details;
instead, a node needs only two size $k$ lists:
its flow-predecessors list and its flow-successors list.
Using this information, any of the actual $k!$ paths sets can be recreated on the fly,
by properly matching flow-predecessors with flow-successors.
As an example, consider node $x$ of Figure~\ref{fig-residual-digraph1}~(d),
which has two flow-predecessors, $w$ and $y$, and two flow-successors, $t$ and $z$;
thus $w$ is part of four distinct paths.
Node $w$ needs only two size-$k$ lists:
its flow-predecessors list, $\{ w, y \}$, and its flow-successors list $\{ z, t \}$.

% ---------------------------------------------------------------------------

\subsection{Possible optimization}
\label{sec-possible-optimization}

We propose a potential speed-up for Algorithm~\ref{alg-basic-edge-disjoint-paths-algorithm}.
We restrict this discussion to edge-disjoint paths and standard DFS;
however, the discussion can be generalized to more general flows and other search patterns.
Consider $V_s = E \cap (\{s\} \times V) = \{(s,v_1), (s,v_2), \dots (s,v_{d_s}) \}$, where $d_s = \mathtt{outdegree}(s)$. 
Step~6 systematically tries all arcs in $V_s$.
Without loss of generality, we assume that step~6 
always tries arcs in the order $(s,v_1), (s,v_2), \dots (s,v_{d_s})$,
stopping upon the first arc which is identified as starting an augmenting path $\alpha$,
say $(s,v_r)$, where $r \in [1,d_s]$.
For brevity, we will indicate this by saying that 
arc $(s,v_r)$ is the \emph{first} one that \emph{succeeds}
(and arcs $(s,v_1)$, $(s,v_2)$, \dots $(s,v_{r-1})$ \emph{fail}).

Consider a complete run of Algorithm~\ref{alg-basic-edge-disjoint-paths-algorithm}.
Assume that this algorithm finds $k$ augmenting paths and then stops. 
Assume that stage~$j \in [1,k]$, finds a new augmenting path $\alpha_j$,
which starts with arc $(s,v_{i_j})$, 
i.e. arc $(s,v_{i_j})$ is the first one that succeeds at stage~$j$.
A direct implementation of Algorithm~\ref{alg-basic-edge-disjoint-paths-algorithm} 
seems to require that step~6 starts a completely new search for each stage, 
restarting from $(s,v_1)$ and retrying arcs that have been previously tried 
(whether they failed or succeeded).

However, this is \emph{not} necessary.
Theorem~\ref{conjecture} indicates that stage~$j+1$ does not need to retry 
the nodes that have already been considered (whether they failed or succeeded).
Specifically, the indices indicating the successful arcs in $V_s$ are ordered by stage number,
$i_1 < i_2 < \dots < i_k$, and, at stage~$j \in [1,k]$, 
we can start the new search directly from arc $(s,v_{i_{j-1}+1})$
(where $i_0 = 0$). 

\begin{theorem}
\label{conjecture}
Step~6 of Algorithm~\ref{alg-basic-edge-disjoint-paths-algorithm} 
can start directly with the arc in $V_s$ 
which follows the previous stage's succeeding arc in $V_s$.
\end{theorem}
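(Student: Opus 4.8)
The plan is to prove the stronger statement that the indices of the succeeding arcs are \emph{strictly increasing}, $i_1 < i_2 < \dots < i_k$, and that at the start of stage~$j+1$ every arc $(s,v_i)$ with $i \le i_j$ \emph{fails}; the optimization then follows immediately, since skipping $(s,v_1),\dots,(s,v_{i_j})$ discards only arcs that cannot begin an augmenting path. I would work entirely in the residual digraph and phrase ``$(s,v_r)$ fails at a given stage'' as ``$t$ is not reachable from $v_r$ in the current residual digraph along a walk that does not return to $s$'', which is legitimate because augmenting paths are simple $s$--$t$ paths: the DFS never re-enters $s$, so an augmenting path uses exactly one out-arc of $s$ and no in-arc of $s$.

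First I would introduce, for stage~$j$, the \emph{failed region} $U_j$: the set of all nodes reachable in $R_{j-1}$, without passing through $s$, from those out-neighbors $v_i$ of $s$ whose forward residual arc is present and which were tried and failed (those with $i < i_j$). By construction $U_j$ is closed under residual reachability avoiding $s$, i.e. every residual arc leaving a node of $U_j$ lands in $U_j \cup \{s\}$, and since all those arcs failed we have $t \notin U_j$. Because a succeeding first arc $(s,v_{i_j})$ must reach $t$, the augmenting path $\alpha_j$ cannot enter $U_j$ (a walk that enters $U_j$ can thereafter only reach $U_j \cup \{s\}$, and it may not revisit $s$); hence $v_{i_j} \notin U_j$ and $\alpha_j$ avoids $U_j$ entirely.

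The heart of the argument is an \emph{invariance} claim: augmenting along $\alpha_j$ (Fact~\ref{fact-augmenting}) leaves $U_j$ and all residual arcs incident to it unchanged, so $U_j$ is still reachability-closed and still excludes $t$ in $R_j$. This holds because augmenting only reverses the arcs lying on $\alpha_j$, none of which touch $U_j$; in particular no new arc escaping $U_j$ can be created, and no forward out-arc of $s$ other than the newly saturated $(s,v_{i_j})$ changes, since the path uses no in-arc of $s$. Consequently, at stage~$j+1$ each flow-carrying arc $(s,v_i)$ with $i \in \{i_1,\dots,i_j\}$ is absent as a forward residual arc and fails trivially, while each previously failed arc $(s,v_i)$ with $i < i_j$ still has $v_i \in U_j$, from which $t$ remains unreachable, so it fails again. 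Thus every arc of index $\le i_j$ fails, forcing $i_{j+1} > i_j$ and validating the direct restart at $(s,v_{i_j+1})$. I would conclude by induction on the stage number, with base case $i_0 = 0$.

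The step I expect to be the main obstacle is the invariance claim, specifically arguing carefully that the arc reversals performed during the consolidation of Fact~\ref{fact-augmenting} neither create a residual arc escaping $U_j$ (which would let a later search leak out of the failed region toward $t$) nor delete any forward out-arc of $s$ besides $(s,v_{i_j})$. A secondary subtlety is bridging the gap between this clean reachability formulation and the actual DFS with persistent ``bread crumb'' markers: I must confirm that the pre-marking of the nodes of $U_j$ performed by a from-scratch search starting at $(s,v_1)$ cannot block any augmenting path beginning at a later out-arc, which again reduces to the closedness of $U_j$, since any such path must avoid $U_j$ on its way to $t$.
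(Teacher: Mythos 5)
Your proposal is sound and, on the crucial half of the claim, follows a genuinely different route from the paper. For arcs that previously \emph{succeeded}, your observation that an augmenting path uses exactly one out-arc of $s$ and no in-arc of $s$ (so a saturated arc $(s,v_{i_l})$ stays reversed forever and is never restored by a pushback) is essentially the paper's argument (1). For arcs that previously \emph{failed}, however, the paper argues by contradiction: it selects the first arc that succeeds after having failed, forms an auxiliary digraph $G'$ by deleting all out-arcs of $s$ except that arc and the earlier successful ones, asserts that the run on $G'$ replays the run on $G$ up to the first failure of that arc, and concludes that the algorithm would then halt on $G'$ with fewer edge-disjoint paths than $G'$ admits, contradicting the correctness of the Ford--Fulkerson method, which is invoked as a black box. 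You instead prove persistence of failure directly, via the invariant that the failed region $U_j$ is closed under residual reachability avoiding $s$ and excludes $t$, is therefore avoided by every augmenting path, and hence---since augmentation per Fact~\ref{fact-augmenting} only reverses arcs of the augmenting path---has all its incident arcs untouched by augmentation; induction then yields $i_1<i_2<\dots<i_k$ and the validity of restarting at $(s,v_{i_j+1})$. What each buys: your argument is self-contained (no appeal to max-flow correctness, and no informal ``the two runs follow exactly the same steps'' simulation claim, whose path count in the paper is in fact slightly garbled, $f$ versus $f-1$), it exhibits a cut-style certificate explaining \emph{why} failed arcs keep failing, and it explicitly bridges the clean reachability formulation with the persistent DFS bread-crumb markers, a point the paper passes over silently; the paper's argument is shorter and sidesteps your two flagged verification obligations (the invariance of $U_j$ under augmentation and the marker equivalence), at the price of resting on the simulation claim and on algorithmic correctness as external facts.
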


\begin{proof}
At stage $j \in [1,k]$, after finding augmenting path $\alpha_j$, 
the algorithm fragments $\alpha_i$ 
together with the previous set of edge-disjoint paths, $P_i$, 
deletes some arcs and reassembles a new and larger set of edge-disjoint-paths, $P_{i+1}$,
such that $| P_{i+1} | = | P_i | + 1$.

(1) Consider arc $(s,v_{i_j})$, the starting arc of path $\alpha_j$.
We first show that a successful arc, such as $(s,v_{i_j})$, 
need not be tried again by step~6, for two arguments:

(1.1) Arc $(s,v_{i_j})$ cannot start another augmenting path, 
because all following residual digraphs will only contain 
its reversal $(v_{i_j},s)$, never its direct form $(s,v_{i_j})$.

(1.2) Arc $(s,v_{i_j})$ cannot be revisited as part of another augmenting path.
No augmenting path contains arcs from $V \times \{s\}$, 
because the search (DFS or BFS) avoids already visited nodes (marked with ``pebbles''), 
and $s$ is always the starting point and thus the first node marked.
Therefore, once successful arc $(s,v_{i_j})$ is never deleted 
by ``flow pushback operations''.

From (1.1) and (1.2), we conclude that, once successful, an arc in $V_s$ 
will always be the starting arc of an edge-disjoint path
and need not be tried again by step~6.

\medskip

(2) We next show that, once failed, an arc in $V_s$ will always fail.
This part of the proof is by contradiction.
We select the first arc that succeeds after first failing and we exhibit a contradiction.
Consider that arc $(s,v_{i_g})$ is this arc, which succeeded at stage $g$,
but failed at least one earlier stage, and let $f$ be the earliest such stage, $f < g$.
It is straightforward to see that, in this case, 
$i_1 < i_2 < \dots < i_{f-1} < i_g < i_f < i_{f+1} < \dots < i_{g-1}$,
and arc $(s,v_{i_g})$ was tried and failed at all stages 
between $f$ (inclusive) and $g$.

As a thought experiment, let us stop the algorithm after step~$g$.
We have obtained $g$ augmenting paths, thus a set $P_g$, of $g$ edge-disjoint paths,
starting with arcs $(s,v_1)$, $(s,v_2)$, $\dots$, 
$(s,v_{f-1})$, $(s,v_f)$, $(s,v_{f+1})$, $\dots$, $(s,v_g)$.
Following the same though experiment, let us run the algorithm on 
digraph $G'$, obtained from $G$, by deleting all arcs in $V_s$ 
except arc $(s,v_{i_g})$ and those arcs that have been successful, 
before arc $(s,v_{i_g})$ was first tried and failed.
More formally, $G' = (V,E')$, where
$E' = E \setminus (V \setminus V'_s)$, 
$V'_s = \{(s,v_1)$, $(s,v_2)$, $\dots$, $(s,v_{f-1})$, $(s,v_{i_g})\}$.
Obviously, $| V'_s | = f $ and digraph $G'$ admits exactly $f$ edge-disjoint paths, 
because
(a) each of the remaining arcs in $V'_s$ can be the start 
of an edge-disjoint path in $P_g$ (which do not use any other arc in $V_s$), and  
(b) digraph $G'$ cannot admit more than $| V'_s | = f $ edge-disjoint paths.

It is straightforward to see that, Algorithm~\ref{alg-basic-edge-disjoint-paths-algorithm}, 
running on digraph $G'$, will follow exactly the same steps as running on digraph $G$, 
up to the point when it first fails on arc $(s,v_{i_g})$.
At this point, the run on digraph $G'$ stops, after finding $f$ augmenting paths
and constructing $f$ edge-disjoint paths.

Thus, the algorithm fails, because $f < g$, which contradicts its correctness.
Therefore, the algorithm will never succeed on an arc that has already failed and
never needs reconsidering again such arcs. 

This completes the second part of the proof. 
\end{proof}

%-----------------------------------------------------------------------------------

\section{Structural and search digraphs in P~systems}
\label{sec-structural-vs-search-digraph}

In this section, we look at various way to reformulate 
the digraph edge- and node-disjoint path problems as a native P~system problem.
The P~system we consider is ``physically'' based on a digraph,
but this digraph is not necessarily the \emph{virtual} search digraph 
$G = (V, E)$, on which we intend to find edge- and node-disjoint paths.
Given a simple~P~system $\Pi = (O, K, \delta)$, where $\delta$ is its structural digraph,
we first identify cells as nodes of interest, $V \simeq K$. 
However, after that, we see three fundamentally distinct scenarios,
which differ in the way how the \emph{forward} and \emph{backward} modes 
(i.e. \emph{backtrack} and \emph{consolidation}) of Algorithm~\ref{alg-basic-edge-disjoint-paths-algorithm} map to the residual arcs and finally to the structural arcs. 

\begin{enumerate}
\item We set $E \simeq \delta$. In this case,
the forward mode follows the direction of parent-child arcs of $\delta$,
while the backward modes follow the reverse direction,
from child to parent.
\item We set $E \simeq \{ (v,u) \mid (u,v) \in \delta \}$. In this case,
the the backward modes follow the direction of parent-child arcs of $\delta$,
while the forward mode of the search follows the reverse direction,
from child to parent.
\item We set $E \simeq \{ (u,v), (v,u) \mid (u,v) \in \delta \}$. In this case,
the resulting search digraph is symmetric, 
and each of the arcs followed by the forward or backward modes of the search
can be either a parent-child arc in the original $\delta$ or its reverse.
\end{enumerate}

Cases (1) and (2) are simpler to develop. 
However, in this paper, we look for solutions in case (3), where 
all messages must be sent to all neighbors, parents and children together. 
Therefore, our rewriting rules use the $\beta = \updownarrow$ and
$\gamma \in \{ \modrepl \} \cup K$ transfer operators 
(also indicated in Section~\ref{sec-preliminary}).
Figure~\ref{fig-virtual-search-digraph} illustrates 
a simple~P~module and these three scenarios.

\begin{figure}[h]
\centerline{\includegraphics[scale=1.0]{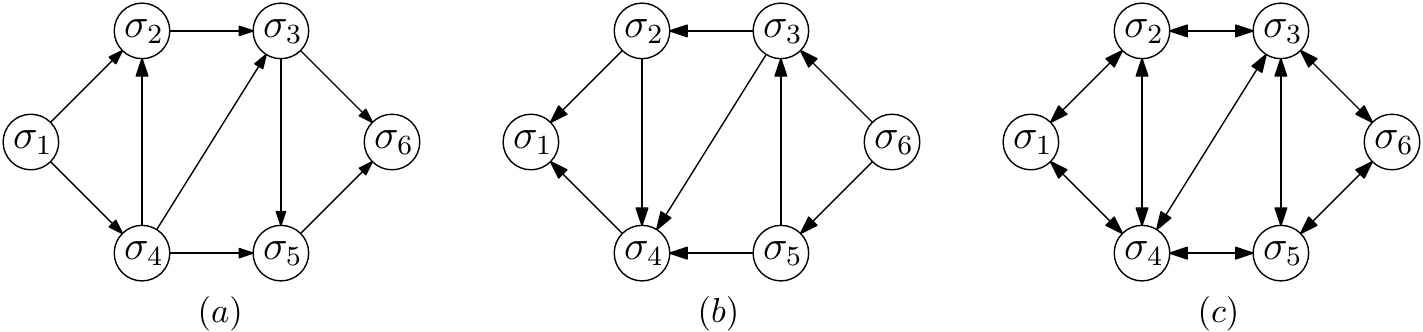}}
\caption{Three virtual search digraphs for the same simple~P~module.
(a) Same ``physical'' and search structure.
(b) The search structure reverses the ``physical'' structure.
(c) The search structure covers both the ``physical'' structure and its reverse.}
\label{fig-virtual-search-digraph}
\end{figure}

Note that, in any of the three cases, 
Algorithm~\ref{alg-basic-edge-disjoint-paths-algorithm} needs to be able
to follow both the parent-child and the child-parent directions of P~system structure.
Therefore, the structural arcs must support duplex communication channels.

After fixing the directions used by the virtual graph $G$, 
the next problem is to let the nodes identify their neighbors, 
i.e. discover the local network topology.

%-----------------------------------------------------------------------------------

\section{Discovering cell neighbors}
\label{sec-cell-neighborhoods}

In this phase, cells discover their own neighbors.
Essentially, each cell sends its own ID to all its neighbors and
records the IDs sent from its neighbors.
This is a preliminary phase which is identical, 
for both edge-disjoint and node-disjoint versions.
Table~\ref{tab-neighbors} illustrates the immediate neighborhoods,
discovered at the end of this phase, for the P~system of
Figure~\ref{fig-virtual-search-digraph}~(a), with
the virtual search structure shown in 
Figure~\ref{fig-virtual-search-digraph}~(c). 

\setcounter{table}{\thefigure}
\begin{table}[ht] 
\caption{Neighbors table for the P~system of 
Figure~\ref{fig-virtual-search-digraph}~(a).
An object $n_j$ indicates that cell $\sigma_j$ is a neighbor of the current cell.}
\label{tab-neighbors}
\begin{center}
\begin{tabular}{ | c | c | c | } \hline
Cell & Neighbors & Objects \\ \hline 
$\sigma_1$ & $\{ \sigma_2, \sigma_4 \}$ & $\{ n_2, n_4 \}$ \\ \hline
$\sigma_2$ & $\{ \sigma_1, \sigma_3, \sigma_4 \}$ & $\{ n_1, n_3, n_4 \}$ \\ \hline
$\sigma_3$ & $\{ \sigma_2, \sigma_4, \sigma_5, \sigma_6 \}$ & $\{ n_2, n_4, n_5, n_6 \}$ \\ \hline
$\sigma_4$ & $\{ \sigma_1, \sigma_2, \sigma_3, \sigma_5 \}$ & $\{ n_1, n_2, n_3, n_5 \}$ \\ \hline
$\sigma_5$ & $\{ \sigma_3, \sigma_4, \sigma_6 \}$ & $\{ n_3, n_4, n_6 \}$ \\ \hline
$\sigma_6$ & $\{ \sigma_3, \sigma_5 \}$ & $\{ n_3, n_5 \}$ \\ \hline
\end{tabular}
\end{center}
\end{table} 
\addtocounter{figure}{1}

The set of objects used in this phase is 
$\{a, k, z\} \cup \bigcup_{1 \leq j \leq n} \{g_j, u_j, n_j\}$.
These objects have the following meanings:
$a$ indicates a cell reachable from $\sigma_s$;
$k$ is the marker of the source cell;
$z$ is the marker of the target cell;
$n_j$ indicates that $\sigma_j$ is a neighbor of the current cell;
$g_j$, $u_j$ indicate that $\sigma_j$ is the target cell;
$g_j$ only appears in the source cell, while $u_j$ does not have this restriction.

Initially, the source cell $\sigma_s$ has one copy of $g_j$, 
representing the ID of the target cell $\sigma_j$,
and the other cells are empty.
All cells start in state $s_0$.
Each reachable cell progresses through states $s_0, s_1, s_2, s_3, s_4$,
according to the rules given below.
In these generic rules (as elsewhere in this paper), we implicitly assume that
(1) subscript $i \in \{1,2,\dots n\}$ is customized for each cell to its cell ID;
and (2) subscript $j$ runs over all cell IDs ($j \in \{1,2,\dots n\}$),
effectively instantiating $n$ versions of each generic rule where it appears. 

\begin{tabular}[t]{ll}
  \begin{minipage}[t]{3.0in}
  \begin{enumerate}
  \setcounter{enumi}{-1}
  \item Rules for state $s_0$:
    \begin{enumerate}[1]
    \item $s_0~ g_i \rightarrow_{\modmin} s_0$
    \item $s_0~ g_j \rightarrow_{\modmin} s_1~ a k~ (u_j)_{\updownarrow_{\modrepl}}$
    \item $s_0~ u_i \rightarrow_{\modmin} s_1~ a z~ (u_i)_{\updownarrow_{\modrepl}}$
  
    \item $s_0~ u_i \rightarrow_{\modmax} s_1$
    \item $s_0~ u_j \rightarrow_{\modmin} s_1~ a~ (u_j)_{\updownarrow_{\modrepl}}$
    \end{enumerate}
  \end{enumerate}
  \end{minipage}

  \begin{minipage}[t]{3.0in}
  \begin{enumerate}
  \setcounter{enumi}{0}
  \item Rules for state $s_1$:
    \begin{enumerate}[1]
    \item $s_1~ a \rightarrow_{\modmin} s_2~ a~ (n_i)_{\updownarrow_{\modrepl}}$
    \end{enumerate}
  \item Rules for state $s_2$:
    \begin{enumerate}[1]
    \item $s_2~ a \rightarrow_{\modmin} s_3~ a$
    \end{enumerate}
  \item Rules for state $s_3$:
    \begin{enumerate}[1]
    \item $s_3~ a \rightarrow_{\modmin} s_4~ a$
    \item $s_3~ u_j \rightarrow_{\modmax} s_4$
    \end{enumerate}
  \end{enumerate}
  \end{minipage}
\end{tabular}

~\

The following example indicates how our generic rules are instantiated to take account  the cell IDs, more specifically, how rules $0.1$ and $0.2$ are instantiated in cell $\sigma_1$:
\begin{itemize}
\item $s_0~ g_1 \rightarrow_{\modmin} s_0$ 
\item $s_0~ g_1 \rightarrow_{\modmin} s_1~ a k~ (u_1)_{\updownarrow_{\modrepl}}$
\item $s_0~ g_2 \rightarrow_{\modmin} s_1~ a k~ (u_2)_{\updownarrow_{\modrepl}}$
\item $\dots$, 
\item $s_0~ g_n \rightarrow_{\modmin} s_1~ a k~ (u_n)_{\updownarrow_{\modrepl}}$
\end{itemize}

The state transitions performed by cell $\sigma_i$, $i \in \{1, 2, \ldots, n\}$, 
are briefly discussed below.

\begin{itemize}
\item $s_0 \rightarrow s_1$:
      If $\sigma_i$ contains $g_j$, $\sigma_i$ becomes the \emph{source cell}.
      The source cell broadcasts object $u_j$ to all its neighbors.
      After receiving an object $u_j$, 
      cell $\sigma_i$ becomes either the \emph{target cell}, if $i=j$;
      or an \emph{intermediate cell}, otherwise.
      Further, each cell relays one of received $u_j$ objects to all its neighbors.

\item $s_1 \rightarrow s_2$: 
      Cell $\sigma_i$ broadcasts $n_i$ to all its neighbors.
      Additionally, $\sigma_i$ accumulates $n_j$ objects from neighbors.

\item $s_2 \rightarrow s_3$: 
      Cell $\sigma_i$ accumulates further $n_j$ objects from neighbors.

\item $s_3 \rightarrow s_4$: 
      Cell $\sigma_i$ accumulates further $n_j$ objects from neighbors.
      Additionally, $\sigma_i$ removes superfluous $e_j$ objects.
\end{itemize}

%-----------------------------------------------------------------------------------

\section{Simple~P~module rules for edge-disjoint paths algorithm}
\label{sec-edge-disjoint-path-algorithm-rules}

First, we give a simple~P~module specification of the 
edge-disjoint paths algorithm presented in Section~\ref{sec-disjoint-paths}.  
We explicitly state our problem in terms of expected input and output.
We need to compute a set of edge-disjoint paths of maximum cardinality
between given source and target cells. 

\smallskip

\noindent \textbf{Edge-Disjoint Paths Problem}\\
\noindent \textbf{Input:} A simple~P~module $\Pi= (O, K, E, \delta)$, 
where the source cell $\sigma_s \in K$ contains a token $t_t$ identifying
the ID of the target cell $\sigma_t \in K$.\\
\noindent \textbf{Output:}  If $s \neq t$, each cell $\sigma_i \in K$
contains a set of objects $P_i = \{ p_j \mid (j,i) \mbox{ is a flow-arc}\}$ and  
a set of objects $C_i = \{ c_j \mid (i,j) \mbox{ is a flow-arc}\}$ that 
represent a maximum set of edge-disjoint paths from $\sigma_s$ to $\sigma_t$, 
where the following constraints hold:\\

\begin{minipage}{5.5in}
\begin{description}
\item[flow-arcs:] 
     $c_i \not\in C_i$, $p_i \not\in P_i$, 
     $c_j \in C_i \Leftrightarrow p_i \in P_j$ and 
     $c_j \in C_i \Rightarrow j \in \delta(i) \cup \delta^{-1}(i)$.
\item[source and target:] 
     $P_s = \emptyset$ and $C_t = \emptyset$.
\item[in flow = out flow:] If $i\not\in \{s,t\}$ then $|C_i|=|P_i|$. 
\item[only paths:] 
With $S(I)= \bigcup_{i \in I} \{j \mid  j \in C_i\}$,
$S^{n-1}(I)=S(S(\cdots S(I)\cdots))=\emptyset$.

\end{description}
\end{minipage}

\smallskip

\noindent Because of the network flow properties, we must also have $|C_s|=|P_t|$,
which also represents the maximum number of edge-disjoint paths.

This implementation has two phases: Phase~I, which is the discovery phase
described in Section~\ref{sec-cell-neighborhoods} (using states $s_0$ to $s_4$), 
and Phase~II, described below (which starts in state $s_4$ and ends in state $s_{13}$).
Table~\ref{tab-edge-output-objects} illustrates
the expected algorithm output, for a simple~P~module with the cell structure
corresponding to Figure~\ref{fig-virtual-search-digraph}~(a).
For convenience, although these are deleted near the algorithm's end,
we also list all the local neighborhood objects 
$N_i = \{n_j \mid j\in \delta(i) \cup \delta^{-1}(i)\}$, for $i \in \{1, 2, \ldots, n\}$, 
which are determined in Phase~I.

\setcounter{table}{\thefigure}
\begin{table}[ht]
\caption{A representation of maximum edge-disjoint paths for simple~P~module of
Figure~\ref{fig-virtual-search-digraph}~(a).}
\label{tab-edge-output-objects}
\begin{center}
\begin{tabular}{ | c | c | c | c | } \hline
Cell $\backslash$ Objects & $N_i$ & $P_i$ & $C_i$ \\ \hline 
$\sigma_1$ & $\{n_2, n_3\}$ & $\emptyset$ & $\{c_2, c_4\}$ \\ \hline
$\sigma_2$ & $\{n_1, n_3, n_4\}$ & $\{p_1\}$ & $\{c_3\}$ \\ \hline
$\sigma_3$ & $\{n_2, n_4, n_5, n_6\}$ & $\{p_2, p_4\}$ & $\{c_5, c_6\}$ \\ \hline
$\sigma_4$ & $\{n_1, n_2, n_3, n_5\}$ & $\{p_1\}$ & $\{c_3\}$ \\ \hline
$\sigma_5$ & $\{n_3, n_4, n_6\}$ & $\{p_3\}$ & $\{c_6\}$ \\ \hline
$\sigma_6$ & $\{n_3, n_5\}$ & $\{p_3, p_5\}$ & $\emptyset$ \\ \hline
\end{tabular}
\end{center}
\end{table}
\addtocounter{figure}{1}

In addition to the set of objects used in Phase~I,
this phase uses the following set of objects:
$\{ b_j$, $c_j$, $d_j$, $e_j$, $f_j$, $h_j$, $m_j$, 
$p_j$, $q_j$, $r_j$, $t_j$, $x_j$, $y_j \} \cup \{ v, w \}$.
In cell $\sigma_i$, these objects have the following meanings:
\begin{itemize}
\item $b_j$ indicates a pushback, received from $\sigma_i$'s flow-successor $\sigma_j$;

\item $e_j$ records the sender of a pushback, if $\sigma_i$ is not-yet-visited;

\item $h_j$ records the sender of a pushback, if $\sigma_i$ has already been visited;

\item $c_j$ indicates that $\sigma_j$ is $\sigma_i$'s flow-successor;

\item $p_j$ indicates that $\sigma_j$ is $\sigma_i$'s flow-predecessor;

\item $r_j$ records a pushback sent by $\sigma_i$ to its flow-predecessor $\sigma_j$;

\item $t_j$ records a backtrack request, after a failed pushback to $\sigma_j$;

\item $d_j$ indicates that $\sigma_j$ is $\sigma_i$'s search-successor;

\item $q_j$ indicates that $\sigma_j$ is $\sigma_i$'s search-predecessor;

\item $f_j$ indicates an attempted search extension received from $\sigma_j$;

\item $m_j$ records that $\sigma_j$ was unsuccessfully tried;

\item $x_j$ indicates that $\sigma_j$ rejected a flow-extension or flow-pushback attempt;

\item $x_j$ indicates that $\sigma_j$ accepted a flow-extension or flow-pushback attempt;

\item $v$ requests that $\sigma_i$ resets the record of all tried and visited neighbors;

\item $w$ requests that $\sigma_i$ remains idle for one step.
\end{itemize}

Initially, the source cell $\sigma_s$ has one copy of $g_j$, 
representing the ID of the target cell $\sigma_j$,
and the other cells are empty.
All cells start in state $s_0$.
According to the rules of Phase~I, 
each reachable cell progresses to state $s_4$,
which is the start of Phase~II, whose generic rules are given below.
In these rules (as in Phase~I), we implicitly assume that
(1) subscript $i \in \{1,2,\dots n\}$ is customized for each cell to its cell ID;
and (2) subscripts $j, k$ run over all cell IDs, $j, k \in \{1,2,\dots n\}$,
and $j \neq k$.
To apply the \emph{optimization} proposed in Section~\ref{sec-possible-optimization}, 
replace rule 5.3 ``$s_5~ d_j x_j \rightarrow_{\modmin} s_5~ a m_j$'' by 
``$s_5~ d_j x_j \rightarrow_{\modmin} s_5~ a$''.

\begin{tabular}[t]{ll}
  \begin{minipage}[t]{3.0in}
  \begin{enumerate}

  \setcounter{enumi}{3}
  \item Rules for a cell $\sigma_i$ in state $s_4$:
    \begin{enumerate}[1]
    \item $s_4~ k \rightarrow_{\modmin} s_5~ k$

    \item $s_4~ z \rightarrow_{\modmin} s_6~ z$

    \item $s_4~ a \rightarrow_{\modmin} s_7~ a$
    \end{enumerate}

  %Rules for the initiating cell
  \item Rules for a cell $\sigma_i$ in state $s_5$:
    \begin{enumerate}[1]
    \item $s_5~ a n_j \rightarrow_{\modmin} s_5~ d_j~ (f_i)_{\updownarrow_{j}}$

    \item $s_5~ d_j y_j \rightarrow_{\modmin} s_{12}~ a c_j w w~ (v)_{\updownarrow_\modrepl}$ 
    \item $s_5~ d_j x_j \rightarrow_{\modmin} s_5~ a m_j$ 

    \item $s_5~ b_j \rightarrow_{\modmin} s_5~ (x_i)_{\updownarrow_{j}}$
    \item $s_5~ f_j \rightarrow_{\modmin} s_5~ (x_i)_{\updownarrow_{j}}$ 

    \item $s_5~ a k \rightarrow_{\modmin} s_{13}~ a a w w~ (a)_{\updownarrow_{\modrepl}}$
    \end{enumerate}

  \item Rules for a cell $\sigma_i$ in state $s_6$:
    \begin{enumerate}[1]
    %replying to the predecessor-child message
    \item $s_6~ n_j f_j \rightarrow_{\modmin} s_6~ p_j~ (y_i)_{\updownarrow_{j}}$ 

    \item $s_6~ v \rightarrow_{\modmin} s_{12}~ w w~ (v)_{\updownarrow_{\modrepl}}$

    \item $s_6~ a a z \rightarrow_{\modmin} s_{13}~ a a w w~ (a)_{\updownarrow_{\modrepl}}$
    \end{enumerate}

  \item Rules for a cell $\sigma_i$ in state $s_7$:
    \begin{enumerate}[1]

    \item $s_7~ v \rightarrow_{\modmin} s_{12}~ w w~ (v)_{\updownarrow_\modrepl}$
    \item $s_7~ a a \rightarrow_{\modmin} s_{13}~ a a w w~ (a)_{\updownarrow_{\modrepl}}$

    \item $s_7~ n_j f_j \rightarrow_{\modmin} s_8~ q_j$
    \item $s_7~ c_j b_j \rightarrow_{\modmin} s_8~ e_j$

    \item $s_7~ h_j \rightarrow_{\modmin} s_{11}~ c_j~ (x_i)_{\updownarrow_j}$
    \item $s_7~ p_j q_k \rightarrow_{\modmin} s_{10}~ p_j q_k$

    \item $s_7~ q_j \rightarrow_{\modmin} s_7~ m_j~ (x_i)_{\updownarrow_j}$
    \item $s_7~ f_j \rightarrow_{\modmin} s_7~ (x_i)_{\updownarrow_j}$
    \end{enumerate}

  \item Rules for a cell $\sigma_i$ in state $s_8$:
    \begin{enumerate}[1]
    \item $s_8~ a n_j \rightarrow_{\modmin} s_9~ a d_j~ (f_i)_{\updownarrow_j}$

    %\item $s_8~ a e_j \rightarrow_{\modmin} s_7~ a e_j~ (x_i)_{\updownarrow_j}$ 
    %\item $s_8~ a q_j \rightarrow_{\modmin} s_7~ a q_j$ 
    \item $s_8~ a \rightarrow_{\modmin} s_{10}~ a$ 
    \end{enumerate}

  \end{enumerate}
  \end{minipage}
&
  \begin{minipage}[t]{3.0in}
  \begin{enumerate}

  \setcounter{enumi}{8}
  \item Rules for a cell $\sigma_i$ in state $s_9$:
    \begin{enumerate}[1]
    
    \item $s_9~ d_j y_j e_k \rightarrow_{\modmin} s_7~ c_j m_k~ (y_i)_{\updownarrow_k}$
    \item $s_9~ d_j y_j q_k \rightarrow_{\modmin} s_7~ c_j p_k~ (y_i)_{\updownarrow_k}$

    \item $s_9~ d_j x_j \rightarrow_{\modmin} s_8~ m_j$

    \item $s_9~ c_j b_j \rightarrow_{\modmin} s_9~ c_j~ (x_i)_{\updownarrow_j}$
    \item $s_9~ n_j f_j \rightarrow_{\modmin} s_9~ m_j~ (x_i)_{\updownarrow_j}$ 
    \end{enumerate} 
  
  \item Rules for a cell $\sigma_i$ in state $s_{10}$:
    \begin{enumerate}[1]
    \item $s_{10}~ a p_j \rightarrow_{\modmin} s_{11}~ a r_j~ (b_i)_{\updownarrow_{j}}$ 

    \item $s_{10}~ a e_j \rightarrow_{\modmin} s_7~ a c_j~ (x_i)_{\updownarrow_{j}}$ 
    \item $s_{10}~ a q_j \rightarrow_{\modmin} s_7~ a m_j~ (x_i)_{\updownarrow_{j}}$ 
    \end{enumerate}

  \item Rules for a cell $\sigma_i$ in state $s_{11}$:
    \begin{enumerate}[1]
    \item $s_{11}~ r_j y_j e_k \rightarrow_{\modmin} s_7~ m_j m_k~ (y_i)_{\updownarrow_k}$
    \item $s_{11}~ r_j y_j q_k \rightarrow_{\modmin} s_7~ m_j p_k~ (y_i)_{\updownarrow_k}$
    \item $s_{11}~ r_j x_j \rightarrow_{\modmin} s_{10}~ t_j$

    \item $s_{11}~ c_j b_j \rightarrow_{\modmin} s_7~ h_j$
    \item $s_{11}~ n_j f_j \rightarrow_{\modmin} s_{11}~ m_j~ (x_i)_{\updownarrow_j}$ 
    \end{enumerate}

  \item Rules for a cell $\sigma_i$ in state $s_{12}$:
    \begin{enumerate}[1]
    \item $s_{12}~ w \rightarrow_{\modmin} s_{12}$
    \item $s_{12}~ v \rightarrow_{\modmax} s_{12}$

    \item $s_{12}~ m_j \rightarrow_{\modmin} s_{12}~ n_j$
    \item $s_{12}~ t_j \rightarrow_{\modmin} s_{12}~ p_j$

    \item $s_{12}~ k \rightarrow_{\modmin} s_5~ k$
    \item $s_{12}~ z \rightarrow_{\modmin} s_6~ z$
    \item $s_{12}~ a \rightarrow_{\modmin} s_7~ a$
    \end{enumerate}

  \item Rules for a cell $\sigma_i$ in state $s_{13}$:
    \begin{enumerate}[1]
    \item $s_{13}~ w \rightarrow_{\modmin} s_{13}$

    \item $s_{13}~ a \rightarrow_{\modmax} s_0$

    \item $s_{13}~ t_j \rightarrow_{\modmin} s_0~ p_j$
    \item $s_{13}~ n_j \rightarrow_{\modmin} s_0$
    \item $s_{13}~ m_j \rightarrow_{\modmin} s_0$

    \end{enumerate}

  \end{enumerate}
  \end{minipage}
\end{tabular}
%~\\

The following paragraphs describe details of several critical steps of our edge-disjoint algorithm, such as forward and consolidation modes of intermediate cells.

\begin{itemize}

\item The rules in state $s_8$ cover the 
      forward mode attempt of an intermediate cell $\sigma_i$,
      via a non-flow arc.

      If $\sigma_i$ has a not-yet-tried neighbor $\sigma_h \notin P_i \cup C_i$, 
      the search continues with $\sigma_h$;
      otherwise the search backtracks.
      
      %If the target cell was reached, the search consolidates;
      %otherwise the search backtracks.

\item The rules in state $s_9$ cover the 
      consolidation mode for an intermediate cell $\sigma_i$,
      who has succeeded a forward extension on a non-flow arc.

      During the consolidation process, the behavior of $\sigma_i$ depends on 
      whether $\sigma_i$ was reached by a flow arc or non-flow arc:
      \begin{itemize}
      \item If $\sigma_i$ was reached by flow arc $(i,j)$, in a reverse direction,
            $\sigma_i$ replaces its flow-successor $\sigma_j$ with $\sigma_h$,
            where $\sigma_h$ is its search-predecessor.
      \item If $\sigma_i$ was reached by a non-flow arc $(k,i)$,
            $\sigma_i$ sets $\sigma_k$ as a flow-predecessor and 
            $\sigma_h$ as a flow-successor,
            where $\sigma_h$ and $\sigma_j$ are $\sigma_i$'s 
            search-predecessor and search-successor, respectively.
      \end{itemize}
      
\item The rules in state $s_{10}$ cover the 
      forward mode attempt of an intermediate cell $\sigma_i$,
      via a pushback.

      If $\sigma_i$ has a not-yet-tried flow-predecessor $\sigma_h$,
      the search continues with $\sigma_h$
      (i.e. use flow arc $(h,i)$, in a reverse direction);
      otherwise the search backtracks.

\item The rules in state $s_{11}$ cover the 
      consolidation mode for an intermediate cell $\sigma_i$,
      who has succeeded a forward extension via a pushback to $\sigma_h$
      (i.e. on a flow-arc $(h,i)$, in a reverse direction).

      During the consolidation process, the behavior of $\sigma_i$ depends on 
      whether $\sigma_i$ was reached by a flow arc or non-flow arc:
      \begin{itemize}
      \item If $\sigma_i$ was reached by flow arc $(i,j)$, in a reverse direction,
            $\sigma_i$ removes its flow-predecessor $\sigma_h$ and 
            flow-successor $\sigma_j$.
      \item If $\sigma_i$ was reached by a non-flow arc $(k,i)$,
            $\sigma_i$ replaces its flow-predecessor $\sigma_h$ with $\sigma_k$,
            where $\sigma_k$ is $\sigma_i$'s search-predecessor.
      \end{itemize}
\end{itemize}

%-----------------------------------------------------------------------------------

\begin{theorem}
For a simple~P~module with $n$ cells and $m = |\delta|$ edges,
the algorithm in this section runs in $O(mn)$ steps.
\end{theorem}

%-----------------------------------------------------------------------------------

\section{Simple~P~module rules for node-disjoint paths algorithm}
\label{sec-node-disjoint-path-algorithm-rules}

First, we give a simple~P~module specification of the 
node-disjoint paths algorithm presented in Section~\ref{sec-disjoint-paths}.  
We explicitly state our problem in terms of expected input and output.
We need to compute a set of node-disjoint paths of maximum cardinality
between given source and target cells. 

\medskip

\noindent \textbf{Node-Disjoint Paths Problem}\\
\noindent \textbf{Input:} A simple~P~module $\Pi= (O, K, E, \delta)$, 
where the source cell $\sigma_s \in K$ contains a token $t_t$ identifying
the ID of the target cell $\sigma_t \in K$.\\
\noindent \textbf{Output:}  If $s\not=t$, each cell $\sigma_i \in K$
contains a set of objects $P_i = \{ p_j \mid (j,i) \mbox{ is a flow-arc}\}$ and  
a set of objects $C_i = \{ c_j \mid (i,j) \mbox{ is a flow-arc}\}$ that 
represent a maximum set of
node-disjoint paths from $\sigma_s$ to $\sigma_t$ where the following constraints hold:\\

\begin{minipage}{5.5in}
\begin{description}
\item[flow-arcs:] 
     $c_i \not\in C_i$, $p_i \not\in P_i$, 
     $c_j \in C_i$ $\Leftrightarrow$ $p_i \in P_j$ and 
     $c_j \in C_i \Rightarrow j \in \delta(i) \cup \delta^{-1}(i)$.
\item[source and target:] 
     $P_s=\emptyset$ and $C_t=\emptyset$.
\item[node-disjoint:] 
     If $i\not\in \{s,t\}$ then $|C_i|=|P_i| \leq 1$. 
     %(for edge-disjoint paths just $|C_i|=|P_i|$).
\item[only paths:] 
     With $S(i)=\left\{ 
     \begin{array}{cl}
     t & \mbox{if } i\in \{s,t\} \mbox{ or } |C_i|=0\\
     j & \mbox{when } C_i=\{c_j\}
     \end{array}
     \right\}$, $S^{n-1}(i)=S(S(\cdots S(i)\cdots))=t$.
\end{description}
\end{minipage}

\medskip

\noindent Because of the network flow properties, we must also have $|C_s|=|P_t|$,
which also represents the maximum number of node-disjoint paths.
Notice the constraints to require only paths has been simplified in that the
successor $S(i)$ of non-source cell $\sigma_i$ is a single cell instead of a set of 
cells that was needed for the general edge-disjoint problem.

\bigskip

Table~\ref{tab-node-output-objects} illustrates
the expected algorithm output, for a simple~P~module with the cell structure
corresponding to Figure~\ref{fig-virtual-search-digraph}~(a).
For convenience, although these are deleted near the algorithm's end,
we also list all the local neighborhood objects 
$N_i = \{n_j \mid j\in \delta(i) \cup \delta^{-1}(i)\}$, for $i \in \{1, 2, \ldots, n\}$, 
which are determined in Phase~I.

\setcounter{table}{\thefigure}
\begin{table}[ht]
\caption{A representation of maximum node-disjoint paths for simple~P~module of
Figure~\ref{fig-virtual-search-digraph}~(a).}
\label{tab-node-output-objects}
\begin{center}
\begin{tabular}{ | c | c | c | c | } \hline
Cell $\backslash$ Objects & $N_i$ & $P_i$ & $C_i$ \\ \hline 
$\sigma_1$ & $\{n_2, n_3\}$ & $\emptyset$ & $\{c_2, c_4\}$ \\ \hline
$\sigma_2$ & $\{n_1, n_3, n_4\}$ & $\{p_1\}$ & $\{c_3\}$ \\ \hline
$\sigma_3$ & $\{n_2, n_4, n_5, n_6\}$ & $\{p_2\}$ & $\{c_6\}$ \\ \hline
$\sigma_4$ & $\{n_1, n_2, n_3, n_5\}$ & $\{p_1\}$ & $\{c_5\}$ \\ \hline
$\sigma_5$ & $\{n_3, n_4, n_6\}$ & $\{p_4\}$ & $\{c_6\}$ \\ \hline
$\sigma_6$ & $\{n_3, n_5\}$ & $\{p_3, p_5\}$ & $\emptyset$ \\ \hline
\end{tabular}
\end{center}
\end{table}
\addtocounter{figure}{1}

The rules of this node-disjoint path algorithm are exactly 
the rules of the edge-disjoint path algorithm described in 
Section~\ref{sec-edge-disjoint-path-algorithm-rules},
where the rules for state $s_7$ are replaced by the following group of rules.

  \begin{enumerate}
  \setcounter{enumi}{6}
  \item Rules for a cell $\sigma_i$ in state $s_7$:
    \begin{enumerate}[1]

    \item $s_7~ v \rightarrow_{\modmin} s_{12}~ w w~ (v)_{\updownarrow_\modrepl}$
    \item $s_7~ a a \rightarrow_{\modmin} s_{13}~ a a w w~ (a)_{\updownarrow_{\modrepl}}$

    \item $s_7~ n_j f_j p_k \rightarrow_{\modmin} s_{10}~ q_j p_k$
    \item $s_7~ n_j f_j \rightarrow_{\modmin} s_8~ q_j$
    \item $s_7~ c_j b_j \rightarrow_{\modmin} s_8~ e_j$

    \item $s_7~ h_j \rightarrow_{\modmin} s_8~ e_j$
    \item $s_7~ r_j \rightarrow_{\modmin} s_{11}~ r_j$
    %\item $s_7~ p_j q_k \rightarrow_{\modmin} s_{10}~ p_j q_k$

    \item $s_7~ q_j \rightarrow_{\modmin} s_7~ m_j~ (x_i)_{\updownarrow_j}$
    \item $s_7~ f_j \rightarrow_{\modmin} s_7~ (x_i)_{\updownarrow_j}$
    \end{enumerate}

  \end{enumerate}

The new state $s_7$ rules implement our proposed non-standard technique 
described in Section~\ref{sec-node-disjoint-paths}, 
for enforcing node capacities to one, 
without node-splitting.

%-------------------------------------------

\medskip

The running time of our node-disjoint paths algorithm runs in
polynomial number of steps, since the algorithm is 
direct implementation of Ford-Fulkerson's network flow algorithm.

\begin{theorem}
For a simple~P~module with $n$ cells and $m = |\delta|$ edges,
the algorithm in this section runs in $O(mn)$ steps.
\end{theorem}

%-----------------------------------------------------------------------------------

\section{Conclusion and Open Problems}
\label{sec-conclusion}

Using the newly introduced simple~P~modules framework,
we have presented native P~system versions 
of the edge- and node-disjoint paths problems.
We have started from standard network flow ideas,
with additional constraints required by our model,
e.g., cells that start without any knowledge about the local and global structure. 
Our P~algorithms use a depth-first search technique
and iteratively build routing tables, 
until they find the maximum number of disjoint paths;
Our P~algorithms run in polynomial time, 
comparable to the standard versions of the Ford-Fulkerson algorithms.   
We have proved and used a speedup optimization,
which probably was not previously known.
For node-disjoint paths, we proposed an alternate set of search rules, 
which can be used for other synchronous network models, where, as in P~systems, 
the standard node-splitting technique is not applicable.

All of our previous P~algorithms assumed that the structural relation $\delta$ 
(of a simple~P~module) 
supports duplex communication channels between adjacent P~system cells.  
Substantial modifications are needed when we consider the simplex communication case.  
It is not just a simple matter of changing the rules of the systems to 
only following out-neighbors when we are finding paths from the source 
to the target---we have explicitly utilized the ability to ``push back'' flow 
on a flow-arc, by sending objects to their flow-predecessors, 
when hunting for augmenting paths.  
Thus, some new ideas are needed before we can compute disjoint paths 
when the structural arcs allow only simplex communication.

We also want to know if we can we solve the problem of finding disjoint paths 
between $k$ pairs of $(s_1,t_1) \ldots (s_k,t_k)$, 
that is comparable in performance to the $O(n^3)$ algorithm of \cite{RobertsonS1995}.

We are interested to know whether a breadth-first search (BFS) approach
would be more beneficial than a depth-first search.  By using BFS we could
potentially exploit more of the parallel nature of P~systems. 

By combining this paper's results with our previous P~solutions
for the Byzantine problem~\cite{DKN-JLAP2010,DKN-CMC2010}, we have now solved one of our original goals, 
i.e. to solve, where possible, the Byzantine problem for P~systems
based on general digraphs, not necessarily complete. 
This more general problem can be solved in two phases:
(1) determine all node-disjoint paths in a digraph, 
assuming that, in this phase, there are no faults; and then 
(2) solve the consensus problem, even if, in this phase, 
some nodes fail in arbitrary, Byzantine ways.
An interesting problem arises, which, apparently, hasn't been considered yet.
What can we do if the Byzantine nodes already behave in a Byzantine manner 
in phase (1), while we attempt to build the node-disjoint paths?
Can we still determine all or a sufficient number of node-disjoint paths,
in the presence of Byzantine faults?

Some of previous experiences~\cite{DKN-MeCBIC2009,NDK-BWMC2009} have suggested that 
P~systems need to be extended with support for mobile arcs,
to incrementally build direct communication channels 
between originally distant cells and we have offered a preliminary solution.
The current experience suggests that this support should
be extended to enable straightforward creation of virtual 
search digraphs on top of existing physical digraphs.

We consider that this continued experience will provide good feedback on
the usability of P~systems as a formal model for parallel and distributed computing
and suggest a range of extensions and improvements, both at the conceptual level
and for practical implementations.

%-------------------------------------------------------------------------------

\section*{Acknowledgments}

The authors wish to thank Koray Altag, Masoud Khosravani, Huiling Wu 
and the three anonymous reviewers 
for valuable comments and feedback that helped us improve the paper.

%-----------------------------------------------------------------------------------

\bibliographystyle{eptcs}
\bibliography{MyRef}  

%-----------------------------------------------------------------------------------

\begin{table}[h]
\caption{Edge-disjoint paths solution traces (steps $0, 1, \ldots, 30$) of the 
simple~P~module shown in Figure~\ref{fig-virtual-search-digraph}~(a),
where $\sigma_1$ is the source cell and $\sigma_6$ is the target cell.}
\label{tab-trace}
\begin{center}
\renewcommand{\tabcolsep}{3.0pt}
\renewcommand{\arraystretch}{1.3}
\footnotesize
%\scriptsize
\noindent
%\hspace*{-1cm}
\begin{tabular}{ | l | l | l | l | l | l | l | }
\hline
Step$\backslash$Cell
 & $\sigma_1$ & $\sigma_2$ & $\sigma_3$ & $\sigma_4$ & $\sigma_5$ & $\sigma_6$ \\ \hline
0 & $s_0~ g_6$ & $s_0~ $ & $s_0~ $ & $s_0~ $ & $s_0~ $ & $s_0~ $ \\ \hline
1 & $s_1~ ak$ & $s_0~ u_6$ & $s_0~ $ & $s_0~ u_6$ & $s_0~ $ & $s_0~ $ \\ \hline
2 & $s_2~ aku_6^{2}$ & $s_1~ an_1u_6$ & $s_0~ u_6^{2}$ & $s_1~ an_1u_6$ & $s_0~ u_6$ & $s_0~ $ \\ \hline
3 & $s_3~ akn_2n_4u_6^{2}$ & $s_2~ an_1n_4u_6^{2}$ & $s_1~ an_2n_4u_6^{2}$ & $s_2~ an_1n_2u_6^{3}$ & $s_1~ an_4u_6$ & $s_0~ u_6^{2}$ \\ \hline
4 & \myway{$s_4~ akn_2n_4$} & $s_3~ an_1n_3n_4u_6^{2}$ & $s_2~ an_2n_4n_5u_6^{3}$ & $s_3~ an_1n_2n_3n_5u_6^{3}$ & $s_2~ an_3n_4u_6^{2}$ & $s_1~ an_3n_5z$ \\ \hline
5 & $s_5~ akn_2n_4$ & \myway{$s_4~ an_1n_3n_4$} & $s_3~ an_2n_4n_5n_6u_6^{3}$ & \myway{$s_4~ an_1n_2n_3n_5$} & $s_3~ an_3n_4n_6u_6^{2}$ & $s_2~ an_3n_5z$ \\ \hline
6 & $s_5~ d_4kn_2$ & $s_7~ an_1n_3n_4$ & \myway{$s_4~ an_2n_4n_5n_6$} & $s_7~ af_1n_1n_2n_3n_5$ & \myway{$s_4~ an_3n_4n_6$} & $s_3~ an_3n_5z$ \\ \hline
7 & $s_5~ d_4kn_2$ & $s_7~ an_1n_3n_4$ & $s_7~ an_2n_4n_5n_6$ & $s_8~ an_2n_3n_5q_1$ & $s_7~ an_3n_4n_6$ & \myway{$s_4~ an_3n_5z$} \\ \hline
8 & $s_5~ d_4kn_2$ & $s_7~ an_1n_3n_4$ & $s_7~ af_4n_2n_4n_5n_6$ & $s_9~ ad_3n_2n_5q_1$ & $s_7~ an_3n_4n_6$ & $s_6~ an_3n_5z$ \\ \hline
9 & $s_5~ d_4kn_2$ & $s_7~ an_1n_3n_4$ & $s_8~ an_2n_5n_6q_4$ & $s_9~ ad_3n_2n_5q_1$ & $s_7~ an_3n_4n_6$ & $s_6~ an_3n_5z$ \\ \hline
10 & $s_5~ d_4kn_2$ & $s_7~ af_3n_1n_3n_4$ & $s_9~ ad_2n_5n_6q_4$ & $s_9~ ad_3n_2n_5q_1$ & $s_7~ an_3n_4n_6$ & $s_6~ an_3n_5z$ \\ \hline
11 & $s_5~ d_4kn_2$ & $s_8~ an_1n_4q_3$ & $s_9~ ad_2n_5n_6q_4$ & $s_9~ ad_3n_2n_5q_1$ & $s_7~ an_3n_4n_6$ & $s_6~ an_3n_5z$ \\ \hline
12 & $s_5~ d_4f_2kn_2$ & $s_9~ ad_1n_4q_3$ & $s_9~ ad_2n_5n_6q_4$ & $s_9~ ad_3n_2n_5q_1$ & $s_7~ an_3n_4n_6$ & $s_6~ an_3n_5z$ \\ \hline
13 & $s_5~ d_4kn_2$ & $s_9~ ad_1n_4q_3x_1$ & $s_9~ ad_2n_5n_6q_4$ & $s_9~ ad_3n_2n_5q_1$ & $s_7~ an_3n_4n_6$ & $s_6~ an_3n_5z$ \\ \hline
14 & $s_5~ d_4kn_2$ & $s_8~ am_1n_4q_3$ & $s_9~ ad_2n_5n_6q_4$ & $s_9~ ad_3n_2n_5q_1$ & $s_7~ an_3n_4n_6$ & $s_6~ an_3n_5z$ \\ \hline
15 & $s_5~ d_4kn_2$ & $s_9~ ad_4m_1q_3$ & $s_9~ ad_2n_5n_6q_4$ & $s_9~ ad_3f_2n_2n_5q_1$ & $s_7~ an_3n_4n_6$ & $s_6~ an_3n_5z$ \\ \hline
16 & $s_5~ d_4kn_2$ & $s_9~ ad_4m_1q_3x_4$ & $s_9~ ad_2n_5n_6q_4$ & $s_9~ ad_3m_2n_5q_1$ & $s_7~ an_3n_4n_6$ & $s_6~ an_3n_5z$ \\ \hline
17 & $s_5~ d_4kn_2$ & $s_8~ am_1m_4q_3$ & $s_9~ ad_2n_5n_6q_4$ & $s_9~ ad_3m_2n_5q_1$ & $s_7~ an_3n_4n_6$ & $s_6~ an_3n_5z$ \\ \hline
18 & $s_5~ d_4kn_2$ & $s_{10}~ am_1m_4q_3$ & $s_9~ ad_2n_5n_6q_4$ & $s_9~ ad_3m_2n_5q_1$ & $s_7~ an_3n_4n_6$ & $s_6~ an_3n_5z$ \\ \hline
19 & $s_5~ d_4kn_2$ & $s_7~ am_1m_3m_4$ & $s_9~ ad_2n_5n_6q_4x_2$ & $s_9~ ad_3m_2n_5q_1$ & $s_7~ an_3n_4n_6$ & $s_6~ an_3n_5z$ \\ \hline
20 & $s_5~ d_4kn_2$ & $s_7~ am_1m_3m_4$ & $s_8~ am_2n_5n_6q_4$ & $s_9~ ad_3m_2n_5q_1$ & $s_7~ an_3n_4n_6$ & $s_6~ an_3n_5z$ \\ \hline
21 & $s_5~ d_4kn_2$ & $s_7~ am_1m_3m_4$ & $s_9~ ad_6m_2n_5q_4$ & $s_9~ ad_3m_2n_5q_1$ & $s_7~ an_3n_4n_6$ & $s_6~ af_3n_3n_5z$ \\ \hline
22 & $s_5~ d_4kn_2$ & $s_7~ am_1m_3m_4$ & $s_9~ ad_6m_2n_5q_4y_6$ & $s_9~ ad_3m_2n_5q_1$ & $s_7~ an_3n_4n_6$ & \myway{$s_6~ an_5p_3z$} \\ \hline
23 & $s_5~ d_4kn_2$ & $s_7~ am_1m_3m_4$ & \myway{$s_7~ ac_6m_2n_5p_4$} & $s_9~ ad_3m_2n_5q_1y_3$ & $s_7~ an_3n_4n_6$ & $s_6~ an_5p_3z$ \\ \hline
24 & $s_5~ d_4kn_2y_4$ & $s_7~ am_1m_3m_4$ & $s_7~ ac_6m_2n_5p_4$ & \myway{$s_7~ ac_3m_2n_5p_1$} & $s_7~ an_3n_4n_6$ & $s_6~ an_5p_3z$ \\ \hline
25 & \myway{$s_{12}~ ac_4kn_2w^{2}$} & $s_7~ am_1m_3m_4v$ & $s_7~ ac_6m_2n_5p_4$ & $s_7~ ac_3m_2n_5p_1v$ & $s_7~ an_3n_4n_6$ & $s_6~ an_5p_3z$ \\ \hline
26 & $s_{12}~ ac_4kn_2v^{2}w$ & $s_{12}~ am_1m_3m_4vw^{2}$ & $s_7~ ac_6m_2n_5p_4v^{2}$ & $s_{12}~ ac_3m_2n_5p_1vw^{2}$ & $s_7~ an_3n_4n_6v$ & $s_6~ an_5p_3z$ \\ \hline
27 & $s_{12}~ ac_4kn_2$ & $s_{12}~ an_1n_3n_4vw$ & $s_{12}~ ac_6m_2n_5p_4v^{2}w^{2}$ & $s_{12}~ ac_3n_2n_5p_1v^{2}w$ & $s_{12}~ an_3n_4n_6vw^{2}$ & $s_6~ an_5p_3v^{2}z$ \\ \hline
28 & $s_5~ ac_4kn_2$ & $s_{12}~ an_1n_3n_4$ & $s_{12}~ ac_6n_2n_5p_4vw$ & $s_{12}~ ac_3n_2n_5p_1$ & $s_{12}~ an_3n_4n_6vw$ & $s_{12}~ an_5p_3vw^{2}z$ \\ \hline
29 & $s_5~ c_4d_2k$ & $s_7~ af_1n_1n_3n_4$ & $s_{12}~ ac_6n_2n_5p_4$ & $s_7~ ac_3n_2n_5p_1$ & $s_{12}~ an_3n_4n_6$ & $s_{12}~ an_5p_3wz$ \\ \hline
30 & $s_5~ c_4d_2k$ & $s_8~ an_3n_4q_1$ & $s_7~ ac_6n_2n_5p_4$ & $s_7~ ac_3n_2n_5p_1$ & $s_7~ an_3n_4n_6$ & $s_{12}~ an_5p_3z$ \\ \hline
\end{tabular}
\end{center}
%\vspace*{-3cm}
\end{table}

\begin{table}[h]
\caption{Edge-disjoint paths solution traces (steps $31, 32, \ldots, 61$) of the 
simple~P~module shown in Figure~\ref{fig-virtual-search-digraph}~(a),
where $\sigma_1$ is the source cell and $\sigma_6$ is the target cell.}
\label{tab-trace1}
\begin{center}
\renewcommand{\tabcolsep}{3.0pt}
\renewcommand{\arraystretch}{1.3}
\footnotesize
%\scriptsize
\noindent
%\hspace*{-1cm}
\begin{tabular}{ | l | l | l | l | l | l | l | }
\hline
Step$\backslash$Cell
 & $\sigma_1$ & $\sigma_2$ & $\sigma_3$ & $\sigma_4$ & $\sigma_5$ & $\sigma_6$ \\ \hline
31 & $s_5~ c_4d_2k$ & $s_9~ ad_3n_4q_1$ & $s_7~ ac_6f_2n_2n_5p_4$ & $s_7~ ac_3n_2n_5p_1$ & $s_7~ an_3n_4n_6$ & $s_6~ an_5p_3z$ \\ \hline
32 & $s_5~ c_4d_2k$ & $s_9~ ad_3n_4q_1$ & $s_8~ ac_6n_5p_4q_2$ & $s_7~ ac_3n_2n_5p_1$ & $s_7~ an_3n_4n_6$ & $s_6~ an_5p_3z$ \\ \hline
33 & $s_5~ c_4d_2k$ & $s_9~ ad_3n_4q_1$ & $s_9~ ac_6d_5p_4q_2$ & $s_7~ ac_3n_2n_5p_1$ & $s_7~ af_3n_3n_4n_6$ & $s_6~ an_5p_3z$ \\ \hline
34 & $s_5~ c_4d_2k$ & $s_9~ ad_3n_4q_1$ & $s_9~ ac_6d_5p_4q_2$ & $s_7~ ac_3n_2n_5p_1$ & $s_8~ an_4n_6q_3$ & $s_6~ an_5p_3z$ \\ \hline
35 & $s_5~ c_4d_2k$ & $s_9~ ad_3n_4q_1$ & $s_9~ ac_6d_5p_4q_2$ & $s_7~ ac_3f_5n_2n_5p_1$ & $s_9~ ad_4n_6q_3$ & $s_6~ an_5p_3z$ \\ \hline
36 & $s_5~ c_4d_2k$ & $s_9~ ad_3n_4q_1$ & $s_9~ ac_6d_5p_4q_2$ & $s_8~ ac_3n_2p_1q_5$ & $s_9~ ad_4n_6q_3$ & $s_6~ an_5p_3z$ \\ \hline
37 & $s_5~ c_4d_2k$ & $s_9~ ad_3f_4n_4q_1$ & $s_9~ ac_6d_5p_4q_2$ & $s_9~ ac_3d_2p_1q_5$ & $s_9~ ad_4n_6q_3$ & $s_6~ an_5p_3z$ \\ \hline
38 & $s_5~ c_4d_2k$ & $s_9~ ad_3m_4q_1$ & $s_9~ ac_6d_5p_4q_2$ & $s_9~ ac_3d_2p_1q_5x_2$ & $s_9~ ad_4n_6q_3$ & $s_6~ an_5p_3z$ \\ \hline
39 & $s_5~ c_4d_2k$ & $s_9~ ad_3m_4q_1$ & $s_9~ ac_6d_5p_4q_2$ & $s_8~ ac_3m_2p_1q_5$ & $s_9~ ad_4n_6q_3$ & $s_6~ an_5p_3z$ \\ \hline
40 & $s_5~ c_4d_2k$ & $s_9~ ad_3m_4q_1$ & $s_9~ ac_6d_5p_4q_2$ & $s_{10}~ ac_3m_2p_1q_5$ & $s_9~ ad_4n_6q_3$ & $s_6~ an_5p_3z$ \\ \hline
41 & $s_5~ b_4c_4d_2k$ & $s_9~ ad_3m_4q_1$ & $s_9~ ac_6d_5p_4q_2$ & $s_{11}~ ac_3m_2q_5r_1$ & $s_9~ ad_4n_6q_3$ & $s_6~ an_5p_3z$ \\ \hline
42 & $s_5~ c_4d_2k$ & $s_9~ ad_3m_4q_1$ & $s_9~ ac_6d_5p_4q_2$ & $s_{11}~ ac_3m_2q_5r_1x_1$ & $s_9~ ad_4n_6q_3$ & $s_6~ an_5p_3z$ \\ \hline
43 & $s_5~ c_4d_2k$ & $s_9~ ad_3m_4q_1$ & $s_9~ ac_6d_5p_4q_2$ & $s_{10}~ ac_3m_2q_5t_1$ & $s_9~ ad_4n_6q_3$ & $s_6~ an_5p_3z$ \\ \hline
44 & $s_5~ c_4d_2k$ & $s_9~ ad_3m_4q_1$ & $s_9~ ac_6d_5p_4q_2$ & $s_7~ ac_3m_2m_5t_1$ & $s_9~ ad_4n_6q_3x_4$ & $s_6~ an_5p_3z$ \\ \hline
45 & $s_5~ c_4d_2k$ & $s_9~ ad_3m_4q_1$ & $s_9~ ac_6d_5p_4q_2$ & $s_7~ ac_3m_2m_5t_1$ & $s_8~ am_4n_6q_3$ & $s_6~ an_5p_3z$ \\ \hline
46 & $s_5~ c_4d_2k$ & $s_9~ ad_3m_4q_1$ & $s_9~ ac_6d_5p_4q_2$ & $s_7~ ac_3m_2m_5t_1$ & $s_9~ ad_6m_4q_3$ & $s_6~ af_5n_5p_3z$ \\ \hline
47 & $s_5~ c_4d_2k$ & $s_9~ ad_3m_4q_1$ & $s_9~ ac_6d_5p_4q_2$ & $s_7~ ac_3m_2m_5t_1$ & $s_9~ ad_6m_4q_3y_6$ & \myway{$s_6~ ap_3p_5z$} \\ \hline
48 & $s_5~ c_4d_2k$ & $s_9~ ad_3m_4q_1$ & $s_9~ ac_6d_5p_4q_2y_5$ & $s_7~ ac_3m_2m_5t_1$ & \myway{$s_7~ ac_6m_4p_3$} & $s_6~ ap_3p_5z$ \\ \hline
49 & $s_5~ c_4d_2k$ & $s_9~ ad_3m_4q_1y_3$ & \myway{$s_7~ ac_5c_6p_2p_4$} & $s_7~ ac_3m_2m_5t_1$ & $s_7~ ac_6m_4p_3$ & $s_6~ ap_3p_5z$ \\ \hline
50 & $s_5~ c_4d_2ky_2$ & \myway{$s_7~ ac_3m_4p_1$} & $s_7~ ac_5c_6p_2p_4$ & $s_7~ ac_3m_2m_5t_1$ & $s_7~ ac_6m_4p_3$ & $s_6~ ap_3p_5z$ \\ \hline
51 & \myway{$s_{12}~ ac_2c_4kw^{2}$} & $s_7~ ac_3m_4p_1v$ & $s_7~ ac_5c_6p_2p_4$ & $s_7~ ac_3m_2m_5t_1v$ & $s_7~ ac_6m_4p_3$ & $s_6~ ap_3p_5z$ \\ \hline
52 & $s_{12}~ ac_2c_4kv^{2}w$ & $s_{12}~ ac_3m_4p_1vw^{2}$ & $s_7~ ac_5c_6p_2p_4v^{2}$ & $s_{12}~ ac_3m_2m_5t_1vw^{2}$ & $s_7~ ac_6m_4p_3v$ & $s_6~ ap_3p_5z$ \\ \hline
53 & $s_{12}~ ac_2c_4k$ & $s_{12}~ ac_3n_4p_1vw$ & $s_{12}~ ac_5c_6p_2p_4v^{2}w^{2}$ & $s_{12}~ ac_3n_2n_5p_1v^{2}w$ & $s_{12}~ ac_6m_4p_3vw^{2}$ & $s_6~ ap_3p_5v^{2}z$ \\ \hline
54 & $s_5~ ac_2c_4k$ & $s_{12}~ ac_3n_4p_1$ & $s_{12}~ ac_5c_6p_2p_4vw$ & $s_{12}~ ac_3n_2n_5p_1$ & $s_{12}~ ac_6n_4p_3vw$ & $s_{12}~ ap_3p_5vw^{2}z$ \\ \hline
55 & $s_{13}~ a^{2}c_2c_4w^{2}$ & $s_7~ a^{2}c_3n_4p_1$ & $s_{12}~ ac_5c_6p_2p_4$ & $s_7~ a^{2}c_3n_2n_5p_1$ & $s_{12}~ ac_6n_4p_3$ & $s_{12}~ ap_3p_5wz$ \\ \hline
56 & $s_{13}~ a^{4}c_2c_4w$ & $s_{13}~ a^{3}c_3n_4p_1w^{2}$ & $s_7~ a^{3}c_5c_6p_2p_4$ & $s_{13}~ a^{3}c_3n_2n_5p_1w^{2}$ & $s_7~ a^{2}c_6n_4p_3$ & $s_{12}~ ap_3p_5z$ \\ \hline
57 & $s_{13}~ a^{4}c_2c_4$ & $s_{13}~ a^{4}c_3n_4p_1w$ & $s_{13}~ a^{4}c_5c_6p_2p_4w^{2}$ & $s_{13}~ a^{5}c_3n_2n_5p_1w$ & $s_{13}~ a^{3}c_6n_4p_3w^{2}$ & $s_6~ a^{3}p_3p_5z$ \\ \hline
58 & $s_0~ c_2c_4$ & $s_{13}~ a^{4}c_3n_4p_1$ & $s_{13}~ a^{5}c_5c_6p_2p_4w$ & $s_{13}~ a^{5}c_3n_2n_5p_1$ & $s_{13}~ a^{4}c_6n_4p_3w$ & $s_{13}~ a^{3}p_3p_5w^{2}$ \\ \hline
59 & $s_0~ c_2c_4$ & $s_0~ c_3p_1$ & $s_{13}~ a^{5}c_5c_6p_2p_4$ & $s_0~ c_3p_1$ & $s_{13}~ a^{4}c_6n_4p_3$ & $s_{13}~ a^{3}p_3p_5w$ \\ \hline
60 & $s_0~ c_2c_4$ & $s_0~ c_3p_1$ & $s_0~ c_5c_6p_2p_4$ & $s_0~ c_3p_1$ & $s_0~ c_6p_3$ & $s_{13}~ a^{3}p_3p_5$ \\ \hline
61 & \myway{$s_0~ c_2c_4$} & \myway{$s_0~ c_3p_1$} & \myway{$s_0~ c_5c_6p_2p_4$} & \myway{$s_0~ c_3p_1$} & \myway{$s_0~ c_6p_3$} & \myway{$s_0~ p_3p_5$} \\ \hline
\end{tabular}
\end{center}
%\vspace*{-3cm}
\end{table}

%-----------------------------------------------------------------------------------

%\setcounter{table}{\thefigure}
\begin{table}[h]
\caption{Node-disjoint paths solution traces (steps $0, 1, \ldots, 29$) of the 
simple~P~module shown in Figure~\ref{fig-virtual-search-digraph}~(a),
where $\sigma_1$ is the source cell and $\sigma_6$ is the target cell.}
\label{tab-trace-node}
\begin{center}
\renewcommand{\tabcolsep}{3.0pt}
\renewcommand{\arraystretch}{1.3}
\footnotesize
%\scriptsize
\noindent
%\hspace*{-1cm}
\begin{tabular}{ | l | l | l | l | l | l | l | }
\hline
Step$\backslash$Cell
 & $\sigma_1$ & $\sigma_2$ & $\sigma_3$ & $\sigma_4$ & $\sigma_5$ & $\sigma_6$ \\ \hline
0 & $s_0~ g_6$ & $s_0~ $ & $s_0~ $ & $s_0~ $ & $s_0~ $ & $s_0~ $ \\ \hline
1 & $s_1~ ak$ & $s_0~ u_6$ & $s_0~ $ & $s_0~ u_6$ & $s_0~ $ & $s_0~ $ \\ \hline
2 & $s_2~ aku_6^{2}$ & $s_1~ an_1u_6$ & $s_0~ u_6^{2}$ & $s_1~ an_1u_6$ & $s_0~ u_6$ & $s_0~ $ \\ \hline
3 & $s_3~ akn_2n_4u_6^{2}$ & $s_2~ an_1n_4u_6^{2}$ & $s_1~ an_2n_4u_6^{2}$ & $s_2~ an_1n_2u_6^{3}$ & $s_1~ an_4u_6$ & $s_0~ u_6^{2}$ \\ \hline
4 & \myway{$s_4~ akn_2n_4$} & $s_3~ an_1n_3n_4u_6^{2}$ & $s_2~ an_2n_4n_5u_6^{3}$ & $s_3~ an_1n_2n_3n_5u_6^{3}$ & $s_2~ an_3n_4u_6^{2}$ & $s_1~ an_3n_5z$ \\ \hline
5 & $s_5~ akn_2n_4$ & \myway{$s_4~ an_1n_3n_4$} & $s_3~ an_2n_4n_5n_6u_6^{3}$ & \myway{$s_4~ an_1n_2n_3n_5$} & $s_3~ an_3n_4n_6u_6^{2}$ & $s_2~ an_3n_5z$ \\ \hline
6 & $s_5~ d_4kn_2$ & $s_7~ an_1n_3n_4$ & \myway{$s_4~ an_2n_4n_5n_6$} & $s_7~ af_1n_1n_2n_3n_5$ & \myway{$s_4~ an_3n_4n_6$} & $s_3~ an_3n_5z$ \\ \hline
7 & $s_5~ d_4kn_2$ & $s_7~ an_1n_3n_4$ & $s_7~ an_2n_4n_5n_6$ & $s_8~ an_2n_3n_5q_1$ & $s_7~ an_3n_4n_6$ & \myway{$s_4~ an_3n_5z$} \\ \hline
8 & $s_5~ d_4kn_2$ & $s_7~ an_1n_3n_4$ & $s_7~ af_4n_2n_4n_5n_6$ & $s_9~ ad_3n_2n_5q_1$ & $s_7~ an_3n_4n_6$ & $s_6~ an_3n_5z$ \\ \hline
9 & $s_5~ d_4kn_2$ & $s_7~ an_1n_3n_4$ & $s_8~ an_2n_5n_6q_4$ & $s_9~ ad_3n_2n_5q_1$ & $s_7~ an_3n_4n_6$ & $s_6~ an_3n_5z$ \\ \hline
10 & $s_5~ d_4kn_2$ & $s_7~ af_3n_1n_3n_4$ & $s_9~ ad_2n_5n_6q_4$ & $s_9~ ad_3n_2n_5q_1$ & $s_7~ an_3n_4n_6$ & $s_6~ an_3n_5z$ \\ \hline
11 & $s_5~ d_4kn_2$ & $s_8~ an_1n_4q_3$ & $s_9~ ad_2n_5n_6q_4$ & $s_9~ ad_3n_2n_5q_1$ & $s_7~ an_3n_4n_6$ & $s_6~ an_3n_5z$ \\ \hline
12 & $s_5~ d_4f_2kn_2$ & $s_9~ ad_1n_4q_3$ & $s_9~ ad_2n_5n_6q_4$ & $s_9~ ad_3n_2n_5q_1$ & $s_7~ an_3n_4n_6$ & $s_6~ an_3n_5z$ \\ \hline
13 & $s_5~ d_4kn_2$ & $s_9~ ad_1n_4q_3x_1$ & $s_9~ ad_2n_5n_6q_4$ & $s_9~ ad_3n_2n_5q_1$ & $s_7~ an_3n_4n_6$ & $s_6~ an_3n_5z$ \\ \hline
14 & $s_5~ d_4kn_2$ & $s_8~ am_1n_4q_3$ & $s_9~ ad_2n_5n_6q_4$ & $s_9~ ad_3n_2n_5q_1$ & $s_7~ an_3n_4n_6$ & $s_6~ an_3n_5z$ \\ \hline
15 & $s_5~ d_4kn_2$ & $s_9~ ad_4m_1q_3$ & $s_9~ ad_2n_5n_6q_4$ & $s_9~ ad_3f_2n_2n_5q_1$ & $s_7~ an_3n_4n_6$ & $s_6~ an_3n_5z$ \\ \hline
16 & $s_5~ d_4kn_2$ & $s_9~ ad_4m_1q_3x_4$ & $s_9~ ad_2n_5n_6q_4$ & $s_9~ ad_3m_2n_5q_1$ & $s_7~ an_3n_4n_6$ & $s_6~ an_3n_5z$ \\ \hline
17 & $s_5~ d_4kn_2$ & $s_8~ am_1m_4q_3$ & $s_9~ ad_2n_5n_6q_4$ & $s_9~ ad_3m_2n_5q_1$ & $s_7~ an_3n_4n_6$ & $s_6~ an_3n_5z$ \\ \hline
18 & $s_5~ d_4kn_2$ & $s_{10}~ am_1m_4q_3$ & $s_9~ ad_2n_5n_6q_4$ & $s_9~ ad_3m_2n_5q_1$ & $s_7~ an_3n_4n_6$ & $s_6~ an_3n_5z$ \\ \hline
19 & $s_5~ d_4kn_2$ & $s_7~ am_1m_3m_4$ & $s_9~ ad_2n_5n_6q_4x_2$ & $s_9~ ad_3m_2n_5q_1$ & $s_7~ an_3n_4n_6$ & $s_6~ an_3n_5z$ \\ \hline
20 & $s_5~ d_4kn_2$ & $s_7~ am_1m_3m_4$ & $s_8~ am_2n_5n_6q_4$ & $s_9~ ad_3m_2n_5q_1$ & $s_7~ an_3n_4n_6$ & $s_6~ an_3n_5z$ \\ \hline
21 & $s_5~ d_4kn_2$ & $s_7~ am_1m_3m_4$ & $s_9~ ad_6m_2n_5q_4$ & $s_9~ ad_3m_2n_5q_1$ & $s_7~ an_3n_4n_6$ & $s_6~ af_3n_3n_5z$ \\ \hline
22 & $s_5~ d_4kn_2$ & $s_7~ am_1m_3m_4$ & $s_9~ ad_6m_2n_5q_4y_6$ & $s_9~ ad_3m_2n_5q_1$ & $s_7~ an_3n_4n_6$ & \myway{$s_6~ an_5p_3z$} \\ \hline
23 & $s_5~ d_4kn_2$ & $s_7~ am_1m_3m_4$ & \myway{$s_7~ ac_6m_2n_5p_4$} & $s_9~ ad_3m_2n_5q_1y_3$ & $s_7~ an_3n_4n_6$ & $s_6~ an_5p_3z$ \\ \hline
24 & $s_5~ d_4kn_2y_4$ & $s_7~ am_1m_3m_4$ & $s_7~ ac_6m_2n_5p_4$ & \myway{$s_7~ ac_3m_2n_5p_1$} & $s_7~ an_3n_4n_6$ & $s_6~ an_5p_3z$ \\ \hline
25 & \myway{$s_{12}~ ac_4kn_2w^{2}$} & $s_7~ am_1m_3m_4v$ & $s_7~ ac_6m_2n_5p_4$ & $s_7~ ac_3m_2n_5p_1v$ & $s_7~ an_3n_4n_6$ & $s_6~ an_5p_3z$ \\ \hline
26 & $s_{12}~ ac_4kn_2v^{2}w$ & $s_{12}~ am_1m_3m_4vw^{2}$ & $s_7~ ac_6m_2n_5p_4v^{2}$ & $s_{12}~ ac_3m_2n_5p_1vw^{2}$ & $s_7~ an_3n_4n_6v$ & $s_6~ an_5p_3z$ \\ \hline
27 & $s_{12}~ ac_4kn_2$ & $s_{12}~ an_1n_3n_4vw$ & $s_{12}~ ac_6m_2n_5p_4v^{2}w^{2}$ & $s_{12}~ ac_3n_2n_5p_1v^{2}w$ & $s_{12}~ an_3n_4n_6vw^{2}$ & $s_6~ an_5p_3v^{2}z$ \\ \hline
28 & $s_5~ ac_4kn_2$ & $s_{12}~ an_1n_3n_4$ & $s_{12}~ ac_6n_2n_5p_4vw$ & $s_{12}~ ac_3n_2n_5p_1$ & $s_{12}~ an_3n_4n_6vw$ & $s_{12}~ an_5p_3vw^{2}z$ \\ \hline
29 & $s_5~ c_4d_2k$ & $s_7~ af_1n_1n_3n_4$ & $s_{12}~ ac_6n_2n_5p_4$ & $s_7~ ac_3n_2n_5p_1$ & $s_{12}~ an_3n_4n_6$ & $s_{12}~ an_5p_3wz$ \\ \hline
\end{tabular}
\end{center}
%\vspace*{-3cm}
\end{table}

\begin{table}[h]
\caption{node-disjoint paths solution traces (steps $30, 31, \ldots, 59$) of the 
simple~P~module shown in Figure~\ref{fig-virtual-search-digraph}~(a),
where $\sigma_1$ is the source cell and $\sigma_6$ is the target cell.}
\label{tab-trace1-node}
\begin{center}
\renewcommand{\tabcolsep}{3.0pt}
\renewcommand{\arraystretch}{1.3}
\footnotesize
%\scriptsize
\noindent
%\hspace*{-1cm}
\begin{tabular}{ | l | l | l | l | l | l | l | }
\hline
Step$\backslash$Cell
 & $\sigma_1$ & $\sigma_2$ & $\sigma_3$ & $\sigma_4$ & $\sigma_5$ & $\sigma_6$ \\ \hline
30 & $s_5~ c_4d_2k$ & $s_8~ an_3n_4q_1$ & $s_7~ ac_6n_2n_5p_4$ & $s_7~ ac_3n_2n_5p_1$ & $s_7~ an_3n_4n_6$ & $s_{12}~ an_5p_3z$ \\ \hline
31 & $s_5~ c_4d_2k$ & $s_9~ ad_3n_4q_1$ & $s_7~ ac_6f_2n_2n_5p_4$ & $s_7~ ac_3n_2n_5p_1$ & $s_7~ an_3n_4n_6$ & $s_6~ an_5p_3z$ \\ \hline
32 & $s_5~ c_4d_2k$ & $s_9~ ad_3n_4q_1$ & $s_{10}~ ac_6n_5p_4q_2$ & $s_7~ ac_3n_2n_5p_1$ & $s_7~ an_3n_4n_6$ & $s_6~ an_5p_3z$ \\ \hline
33 & $s_5~ c_4d_2k$ & $s_9~ ad_3n_4q_1$ & $s_{11}~ ac_6n_5q_2r_4$ & $s_7~ ab_3c_3n_2n_5p_1$ & $s_7~ an_3n_4n_6$ & $s_6~ an_5p_3z$ \\ \hline
34 & $s_5~ c_4d_2k$ & $s_9~ ad_3n_4q_1$ & $s_{11}~ ac_6n_5q_2r_4$ & $s_8~ ae_3n_2n_5p_1$ & $s_7~ an_3n_4n_6$ & $s_6~ an_5p_3z$ \\ \hline
35 & $s_5~ c_4d_2k$ & $s_9~ ad_3f_4n_4q_1$ & $s_{11}~ ac_6n_5q_2r_4$ & $s_9~ ad_2e_3n_5p_1$ & $s_7~ an_3n_4n_6$ & $s_6~ an_5p_3z$ \\ \hline
36 & $s_5~ c_4d_2k$ & $s_9~ ad_3m_4q_1$ & $s_{11}~ ac_6n_5q_2r_4$ & $s_9~ ad_2e_3n_5p_1x_2$ & $s_7~ an_3n_4n_6$ & $s_6~ an_5p_3z$ \\ \hline
37 & $s_5~ c_4d_2k$ & $s_9~ ad_3m_4q_1$ & $s_{11}~ ac_6n_5q_2r_4$ & $s_8~ ae_3m_2n_5p_1$ & $s_7~ an_3n_4n_6$ & $s_6~ an_5p_3z$ \\ \hline
38 & $s_5~ c_4d_2k$ & $s_9~ ad_3m_4q_1$ & $s_{11}~ ac_6n_5q_2r_4$ & $s_9~ ad_5e_3m_2p_1$ & $s_7~ af_4n_3n_4n_6$ & $s_6~ an_5p_3z$ \\ \hline
39 & $s_5~ c_4d_2k$ & $s_9~ ad_3m_4q_1$ & $s_{11}~ ac_6n_5q_2r_4$ & $s_9~ ad_5e_3m_2p_1$ & $s_8~ an_3n_6q_4$ & $s_6~ an_5p_3z$ \\ \hline
40 & $s_5~ c_4d_2k$ & $s_9~ ad_3m_4q_1$ & $s_{11}~ ac_6f_5n_5q_2r_4$ & $s_9~ ad_5e_3m_2p_1$ & $s_9~ ad_3n_6q_4$ & $s_6~ an_5p_3z$ \\ \hline
41 & $s_5~ c_4d_2k$ & $s_9~ ad_3m_4q_1$ & $s_{11}~ ac_6m_5q_2r_4$ & $s_9~ ad_5e_3m_2p_1$ & $s_9~ ad_3n_6q_4x_3$ & $s_6~ an_5p_3z$ \\ \hline
42 & $s_5~ c_4d_2k$ & $s_9~ ad_3m_4q_1$ & $s_{11}~ ac_6m_5q_2r_4$ & $s_9~ ad_5e_3m_2p_1$ & $s_8~ am_3n_6q_4$ & $s_6~ an_5p_3z$ \\ \hline
43 & $s_5~ c_4d_2k$ & $s_9~ ad_3m_4q_1$ & $s_{11}~ ac_6m_5q_2r_4$ & $s_9~ ad_5e_3m_2p_1$ & $s_9~ ad_6m_3q_4$ & $s_6~ af_5n_5p_3z$ \\ \hline
44 & $s_5~ c_4d_2k$ & $s_9~ ad_3m_4q_1$ & $s_{11}~ ac_6m_5q_2r_4$ & $s_9~ ad_5e_3m_2p_1$ & $s_9~ ad_6m_3q_4y_6$ & \myway{$s_6~ ap_3p_5z$} \\ \hline
45 & $s_5~ c_4d_2k$ & $s_9~ ad_3m_4q_1$ & $s_{11}~ ac_6m_5q_2r_4$ & $s_9~ ad_5e_3m_2p_1y_5$ & \myway{$s_7~ ac_6m_3p_4$} & $s_6~ ap_3p_5z$ \\ \hline
46 & $s_5~ c_4d_2k$ & $s_9~ ad_3m_4q_1$ & $s_{11}~ ac_6m_5q_2r_4y_4$ & \myway{$s_7~ ac_5m_2m_3p_1$} & $s_7~ ac_6m_3p_4$ & $s_6~ ap_3p_5z$ \\ \hline
47 & $s_5~ c_4d_2k$ & $s_9~ ad_3m_4q_1y_3$ & \myway{$s_7~ ac_6m_4m_5p_2$} & $s_7~ ac_5m_2m_3p_1$ & $s_7~ ac_6m_3p_4$ & $s_6~ ap_3p_5z$ \\ \hline
48 & $s_5~ c_4d_2ky_2$ & \myway{$s_7~ ac_3m_4p_1$} & $s_7~ ac_6m_4m_5p_2$ & $s_7~ ac_5m_2m_3p_1$ & $s_7~ ac_6m_3p_4$ & $s_6~ ap_3p_5z$ \\ \hline
49 & \myway{$s_{12}~ ac_2c_4kw^{2}$} & $s_7~ ac_3m_4p_1v$ & $s_7~ ac_6m_4m_5p_2$ & $s_7~ ac_5m_2m_3p_1v$ & $s_7~ ac_6m_3p_4$ & $s_6~ ap_3p_5z$ \\ \hline
50 & $s_{12}~ ac_2c_4kv^{2}w$ & $s_{12}~ ac_3m_4p_1vw^{2}$ & $s_7~ ac_6m_4m_5p_2v^{2}$ & $s_{12}~ ac_5m_2m_3p_1vw^{2}$ & $s_7~ ac_6m_3p_4v$ & $s_6~ ap_3p_5z$ \\ \hline
51 & $s_{12}~ ac_2c_4k$ & $s_{12}~ ac_3n_4p_1vw$ & $s_{12}~ ac_6m_4m_5p_2v^{2}w^{2}$ & $s_{12}~ ac_5n_2n_3p_1v^{2}w$ & $s_{12}~ ac_6m_3p_4vw^{2}$ & $s_6~ ap_3p_5v^{2}z$ \\ \hline
52 & $s_5~ ac_2c_4k$ & $s_{12}~ ac_3n_4p_1$ & $s_{12}~ ac_6n_4n_5p_2vw$ & $s_{12}~ ac_5n_2n_3p_1$ & $s_{12}~ ac_6n_3p_4vw$ & $s_{12}~ ap_3p_5vw^{2}z$ \\ \hline
53 & $s_{13}~ a^{2}c_2c_4w^{2}$ & $s_7~ a^{2}c_3n_4p_1$ & $s_{12}~ ac_6n_4n_5p_2$ & $s_7~ a^{2}c_5n_2n_3p_1$ & $s_{12}~ ac_6n_3p_4$ & $s_{12}~ ap_3p_5wz$ \\ \hline
54 & $s_{13}~ a^{4}c_2c_4w$ & $s_{13}~ a^{3}c_3n_4p_1w^{2}$ & $s_7~ a^{3}c_6n_4n_5p_2$ & $s_{13}~ a^{3}c_5n_2n_3p_1w^{2}$ & $s_7~ a^{2}c_6n_3p_4$ & $s_{12}~ ap_3p_5z$ \\ \hline
55 & $s_{13}~ a^{4}c_2c_4$ & $s_{13}~ a^{4}c_3n_4p_1w$ & $s_{13}~ a^{4}c_6n_4n_5p_2w^{2}$ & $s_{13}~ a^{5}c_5n_2n_3p_1w$ & $s_{13}~ a^{3}c_6n_3p_4w^{2}$ & $s_6~ a^{3}p_3p_5z$ \\ \hline
56 & $s_0~ c_2c_4$ & $s_{13}~ a^{4}c_3n_4p_1$ & $s_{13}~ a^{5}c_6n_4n_5p_2w$ & $s_{13}~ a^{5}c_5n_2n_3p_1$ & $s_{13}~ a^{4}c_6n_3p_4w$ & $s_{13}~ a^{3}p_3p_5w^{2}$ \\ \hline
57 & $s_0~ c_2c_4$ & $s_0~ c_3p_1$ & $s_{13}~ a^{5}c_6n_4n_5p_2$ & $s_0~ c_5p_1$ & $s_{13}~ a^{4}c_6n_3p_4$ & $s_{13}~ a^{3}p_3p_5w$ \\ \hline
58 & $s_0~ c_2c_4$ & $s_0~ c_3p_1$ & $s_0~ c_6p_2$ & $s_0~ c_5p_1$ & $s_0~ c_6p_4$ & $s_{13}~ a^{3}p_3p_5$ \\ \hline
59 & \myway{$s_0~ c_2c_4$} & \myway{$s_0~ c_3p_1$} & \myway{$s_0~ c_6p_2$} & \myway{$s_0~ c_5p_1$} & \myway{$s_0~ c_6p_4$} & \myway{$s_0~ p_3p_5$} \\ \hline
\end{tabular}
\end{center}
%\vspace*{-3cm}
\end{table}

%-----------------------------------------------------------------------------------

\end{document}